\documentclass[11pt,oneside]{article}
\usepackage[english]{babel}
\usepackage[colorlinks]{hyperref}
\usepackage{amsthm,amsmath,amsfonts,prettyref,paralist,fullpage}

\newtheorem{thm}{Theorem}
\newtheorem{lem}[thm]{Lemma}

\DeclareMathOperator\Inf{Inf}
\DeclareMathOperator*\Var{Var}
\DeclareMathOperator*\Ent{Ent}
\DeclareMathOperator\BG{BG}
\DeclareMathOperator\NTW{\texttt{NTW}}
\DeclareMathOperator*\E{\mathbb{E}}
\DeclareMathOperator\D{\mathbb{D}}
\DeclareMathOperator\Sch{Sch}
\newcommand\R{\mathbb{R}}
\newcommand\Z{\mathbb{Z}}

\begin{document}
\title{Hypercontractivity and its Applications}
\author{Punyashloka Biswal}
\maketitle

\begin{abstract}
  Hypercontractive inequalities are a useful tool in dealing with extremal questions in the geometry of high-dimensional
  discrete and continuous spaces. In this survey we trace a few connections between different manifestations of
  hypercontractivity, and also present some relatively recent applications of these techniques in computer science.
\end{abstract}

\section{Preliminaries and notation}

\paragraph*{Fourier analysis on the hypercube.}

We define the inner product $\langle f,g \rangle = \E_{x}f(x)g(x)$ on functions $f,g \colon \{-1,1\}^{n} \to \R$, where
the expectation is taken over the uniform (counting) measure on $\{-1,1\}^n$. The multilinear polynomials
$\chi_{S}(x)=\prod_{i\in S}x_{i}$ (where $S$ ranges over subsets of $[n]$) form an orthogonal basis under this inner
product; they are called the Fourier basis. Thus, for any function $f \colon \{-1,1\}^{n}\to\R$, we have $f =
\sum_{S\subseteq[n]}\hat{f}(S)\chi_{S}(x)$, where the Fourier coefficients $\hat{f}(S)=\langle f,\chi_{S}\rangle$ obey
Plancherel's relation $\sum\hat{f}(S)^{2}=1$. It is easy to verify that $\E_{x}f(x)=\hat{f}(0)$ and
$\Var_{x}f(x)=\sum_{S\neq\emptyset}\hat{f}(S)^{2}$.

\paragraph*{Norms.}

For $1\leq p<\infty$, define the $\ell_{p}$ norm $\|f\|_{p}=(\E_{x}|f(x)|^{p})^{1/p}$.  These norms are monotone in $p$:
for every function $f$, $p\geq q$ implies $\|f\|_{p}\geq\|f\|_{q}$. For a linear operator $M$ carrying functions $f
\colon \{-1,1\}^{n}\to\R$ to functions $Mf=g \colon \{-1,1\}^{n}\to\R$, we define the $p$-to-$q$ operator norm
$\|M\|_{p\to q}=\sup_{f}\|Mf\|_{q}/\|f\|_{p}$.  $M$ is said to be a contraction from $\ell_{p}$ to $\ell_{q}$ when
$\|M\|_{p\to q}\leq1$. Because of the monotonicity of norms, a contraction from $\ell_{p}$ to $\ell_{p}$ is
automatically a contraction from $\ell_{p}$ to $\ell_{q}$ for any $q<p$. When $q>p$ and $\|M\|_{p\to q}\leq1$, then $M$
is said to be hypercontractive.

\paragraph*{Convolution operators.}

Letting $xy$ represent the coordinatewise product of $x, y \in \{-1,1\}^n$, we define the convolution
$(f*g)(x)=\E_{y}f(x)g(xy)$ of two functions $f,g \colon \{-1,1\}^{n}\to\R$, and note that it is a linear operator
$f\mapsto f*g$ for every fixed $g$.  Convolution is commutative and associative, and the Fourier coefficients of a
convolution satisfy the useful property $\widehat{f*g}=\hat{f}\hat{g}$.  We shall be particularly interested in the
convolution properties of the following functions
\begin{itemize}
\item The Dirac delta $\delta \colon \{-1,1\}^{n}\to\R$, given by $\delta(1,\dotsc,1)=1$ and $\delta(x)=0$ otherwise. It
  is the identity for convolution and has $\hat{\delta}(S)=1$ for all $S\subseteq[n]$.
\item The edge functions $h_{i} \colon \{-1,1\}^{n}\to\R$ given by\[ h_{i}(x)=
  \begin{cases}
  \phantom{-}1/2 & x=(1,\dotsc,1)\\
  -1/2 & x_{i}=-1,x_{[n]\setminus\{i\}}=(1,\dotsc,1)\\ 
  \phantom{-}0 & \text{otherwise.}
  \end{cases}
  \] $\hat{h}_{i}(S)$ is $1$ or $0$ according as $S$ contains or does not contain $i$, respectively. For any function $f
  \colon \{-1,1\}^{n}\to\R$, $(f*h_{i})(x)=(f(x)-f(y))/2$, where $y$ is obtained from $x$ by flipping just the $i$th
  bit. Convolution with $h_{i}$ acts as an orthogonal projection (as we can easily see in the Fourier domain), so for
  any functions $f,g \colon \{-1,1\}^{n}\to\R$, we have $\langle f*h_{i},g\rangle=\langle f,h_{i}*g\rangle=\langle
  f*h_{i},g*h_{i}\rangle$
\item The Bonami-Gross-Beckner noise functions $\BG_{\rho} \colon \{-1,1\}^{n}\to\R$ for $0\leq\rho\leq1$, where
  $\widehat{\BG}_{\rho}(S)=\rho^{|S|}$ and we define $0^{0}=1$. These operators form a semigroup, because
  $\BG_{\sigma}*\BG_{\rho}=\BG_{\sigma\rho}$ and $\BG_{1}=\delta$.  Note that
  $\BG_{\rho}(x)=\sum_{S}\rho^{|S|}\chi_{S}(x)=\prod_{i}(1+\rho x_{i})$.  We define the noise operator $T_{\rho}$ acting
  on functions on the discrete cube by $T_{\rho}f=\BG_{\rho}*f$. In combinatorial terms, $(T_{\rho}f)(x)$ is the
  expected value of $f(y)$, where $y$ is obtained from $x$ by independently flipping each bit of $x$ with probability
  $1-\rho$.
\end{itemize}
\begin{lem}
$\frac{d}{d\rho}\BG_{\rho}=\frac{1}{\rho}\BG_{\rho}*\sum
  h_{i}$
\end{lem}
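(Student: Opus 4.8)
The plan is to verify the identity in the Fourier basis, coefficient by coefficient. Every function in sight is a finite linear combination of the characters $\chi_S$, and the operations involved — scaling by $1/\rho$, differentiation in $\rho$, addition, and convolution — either act linearly on Fourier coefficients or act diagonally (in the case of convolution, via $\widehat{f*g}=\hat f\hat g$). So for each fixed $x$ both sides are finite, real-analytic functions of $\rho$, and it suffices to check that their Fourier expansions agree term by term.

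First I would compute the Fourier transform of the right-hand side. By linearity of the Fourier transform and the stated values $\hat h_i(S)=\mathbf{1}[i\in S]$, we get $\widehat{\sum_i h_i}(S)=\sum_i\mathbf{1}[i\in S]=|S|$. The convolution rule then gives $\widehat{\BG_\rho*\sum_i h_i}(S)=\rho^{|S|}\cdot|S|$, so the $S$-th Fourier coefficient of $\tfrac1\rho\,\BG_\rho*\sum_i h_i$ is $|S|\,\rho^{|S|-1}$. For the left-hand side, since $\BG_\rho=\sum_S\rho^{|S|}\chi_S$ is a finite sum, differentiation in $\rho$ passes through the sum and $\tfrac{d}{d\rho}\rho^{|S|}=|S|\,\rho^{|S|-1}$; hence $\widehat{\tfrac{d}{d\rho}\BG_\rho}(S)=|S|\,\rho^{|S|-1}$, matching the right-hand side and proving the lemma.

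There is essentially no obstacle here; the only point needing a word of care is the factor $1/\rho$ at $\rho=0$. For $\rho>0$ everything above is literally valid. At $\rho=0$ one reads $\tfrac1\rho\rho^{|S|}|S|$ as its limit, which is $1$ for $|S|=1$ and $0$ otherwise — exactly the value of the derivative of $\rho^{|S|}$ at $\rho=0$ — so the identity extends by continuity (alternatively one simply states the lemma for $\rho\in(0,1]$).

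For readers who prefer a pointwise derivation, the same conclusion falls out of the product formula $\BG_\rho(x)=\prod_i(1+\rho x_i)$: the product rule gives $\tfrac{d}{d\rho}\BG_\rho(x)=\sum_i x_i\prod_{j\neq i}(1+\rho x_j)$, while expanding $\BG_\rho*h_i$ (which projects onto the characters containing $i$) yields $(\BG_\rho*h_i)(x)=\sum_{S\ni i}\rho^{|S|}\chi_S(x)=\rho\,x_i\prod_{j\neq i}(1+\rho x_j)$; dividing by $\rho$ and summing over $i$ gives the claim.
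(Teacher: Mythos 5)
Your proof is correct and follows the same route as the paper: verify the identity coefficientwise in the Fourier basis, using $\hat h_i(S)=\mathbf{1}[i\in S]$ and $\widehat{f*g}=\hat f\hat g$ to see that both sides have $S$-th coefficient $|S|\rho^{|S|-1}$. The remarks about $\rho=0$ and the alternative pointwise derivation are fine additions but not needed.
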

\begin{proof}
  \label{lem:noise-deriv}This is easy in the Fourier basis:
  \[
  \widehat{\BG}_{\rho}'
  = (\rho^{|S|})'
  = |S|\rho^{|S|-1}
  = \sum_{i\in[n]}\hat{h}_{i}\frac{\widehat{\BG}_{\rho}}{\rho}.\qedhere
  \]
\end{proof}

\section{The Bonami-Gross-Beckner Inequality}

\subsection{Poincar\'e and Log-Sobolev inequalities}

The Poincar\'e and logarithmic Sobolev inequalities both relate a function's global non-constantness to how fast it
changes ``locally''. The amount of local change is quantified by the \emph{energy} $\D(f,f)$, where the Dirichlet form
$\D$ is defined as
\[
\D(f,g)=\tfrac12 \E_{xy\in E}(f(x)-f(y))(g(x)-g(y))
\]
($E$ is the set of pairs $x,y$ that differ in a single coordinate).  In terms of the edge functions $h_{i}$, observe
that $\D(f,g)=\frac2n \sum_{i}\langle f*h_{i},g*h_{i}\rangle$.

In the case of the Poincar\'e inequality, we measure the distance of $f$ to a constant by its variance $\Var(f)=\E(f-\E
f)^{2}=\E f^{2}-(\E f)^{2}$.  Then the Poincar\'e constant (of the discrete cube) is the supremal $\lambda$ such that the
inequality
\[
\D(f,f)\geq\lambda\Var(f)
\]
holds for all $f \colon \{-1,1\}^{n}\to\R$. This quantity is also the smallest nonzero eigenvalue of the Laplacian of
the discrete cube, viewed as a graph (i.e., its spectral expansion).

Another way of measuring the non-constantness of a function is to consider its entropy $\Ent(f)=\E[f\log \frac{f}{\E
  f}]$ (where we assume $f\geq0$ and use the convention that $0\log0=0$). Note that $\Ent(cf)=c\Ent(f)$ for any
$c\geq0$, so the entropy is homogenous of degree $1$ in its argument. Because we are comparing the entropy with the
energy (which is homogenous of degree $2$) we use the entropy of the \emph{square} of the function to define the
Log-Sobolev constant: the largest $\alpha$ such that the inequality\[ \D(f,f)\geq\alpha\Ent(f^{2})\] holds for all $f
\colon \{-1,1\}^{n}\to\R$.
 For the discrete cube $\{-1,1\}^{n}$, we have $\lambda=2/n$ and $\alpha=1/n$, as we shall
see below. It is interesting to ask how these quantities are related when we consider other probability spaces equipped
with a suitable Dirichlet form (for example, $d$-regular graphs with $\D(f,g)=\E_{xy\in E}(f(x)-f(y))(g(x)-g(y))$, where
the expectation is taken over all edges). Set $f=1+\epsilon g$ for a sufficiently small $\epsilon$ and observe that
$\Var(f)=\epsilon^{2}\Var(g)$ and $\D(f,f)=\epsilon^{2}\D(g,g)$, whereas
\begin{align*}
  \Ent(f^{2}) &= \E\left[(1+\epsilon g)^{2}(2\log(1+\epsilon g)
    -\log\E[(1+\epsilon g)^{2}])\right]\\
  &=2\epsilon^{2}\Var(g)+O(\epsilon^{3})
\end{align*}
This shows that $\alpha\leq\lambda/2$, which is tight in the case of the cube.  However, for constant-degree expander
families (in particular, for random $d$-regular graphs with high probability) we have \cite[Example 4.2]{DSC}
$\lambda=\Omega(1)$ but $\alpha=O(\log\log n/\log n)\ll\lambda$.

\subsection{Hypercontractivity and the log-Sobolev inequality}

When $\rho\in[0,1]$, the noise operator $T_{\rho}$ is easily seen to contract $\ell_{2}$: for any $f \colon \{-1,1\}^{n}
\to \R$, we have $\|T_{\rho}f\|_{2}^{2} = \sum_{S}\rho^{|S|}\hat{f}(S)^{2} \leq \sum_{S} \hat{f}(S)^{2} =
\|f\|_{2}^{2}$.  Now consider its behavior from $\ell_{2}$ to $\ell_{q}$ for some $q>2$. When $\rho=1$, we have
$T_{1}f=f$; in particular, for $g(x) = (1+x_{1})/2$, $\|g\|_{q} = 1/2^{1/q}>1/2^{1/2} = \|g\|_{2}$. On the other hand,
$T_{0}f=\E f$, so $\|T_{0}f\|_{q}=|\E f|\leq\|\E f^{2}\|^{1/2}$.  By the intermediate value theorem, there must be some
$\rho\in(0,1)$ such that $\|T_{0}\|_{2\to q}=1$.  A theorem of Gross \cite{Gross} connects this critical $\rho$ with the
Log-Sobolev constant $\alpha$ of the underlying space:
\begin{thm}\label{thm:gross}
  $\|T_{\rho} f\|_{p \to q} \leq 1$ if and only if $\rho^{-2 \alpha n} \geq\frac{q-1}{p-1}$.
\end{thm}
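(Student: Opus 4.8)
The plan is to follow Gross's original strategy: establish the ``if'' direction (log-Sobolev $\Rightarrow$ hypercontractivity) by tracking $\|P_t f\|_{q(t)}$ along the semigroup flow for a well-chosen exponent function $q(t)$, and the ``only if'' direction by a second-order perturbation at a constant. First I would switch to the semigroup parametrization $P_t:=T_{e^{-t}}$, $t\ge0$; in Fourier $\widehat{P_t f}(S)=e^{-t|S|}\hat f(S)$, so $\{P_t\}$ has generator $L$ with $\widehat{Lf}(S)=-|S|\hat f(S)$, and the identities recorded for the edge functions give $\langle f,-Lf\rangle=\sum_S|S|\hat f(S)^2=\tfrac n2\,\D(f,f)$. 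Thus the log-Sobolev inequality $\D(f,f)\ge\alpha\,\Ent(f^2)$ is exactly $\langle f,-Lf\rangle\ge\tfrac{\alpha n}{2}\Ent(f^2)$. I would also reduce to $f>0$: since $\BG_\rho\ge0$ for $\rho\in[0,1]$ the $T_\rho$ are positivity-preserving averages, so $|T_\rho f|\le T_\rho|f|$ and $\|f\|_p=\||f|\|_p$, and replacing $f$ by $|f|+\varepsilon$ and letting $\varepsilon\to0$ reduces to the strictly positive case, which keeps $u:=P_tf>0$ along the whole flow and makes everything smooth in $t$.

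For the ``if'' direction, fix $p>1$ and set $q(t)=1+(p-1)e^{2\alpha n t}$, so that $q(0)=p$ and, with $\rho=e^{-t}$, the equality $q=q(t)$ is precisely $\rho^{-2\alpha n}=\frac{q-1}{p-1}$. Write $u=P_tf$, $q=q(t)$, $F(t)=\|u\|_{q(t)}$; the goal is $F'\le0$. Differentiating $\log F=\tfrac1q\log\E u^q$ in $t$, the two terms carrying $q'(t)$ recombine --- using only $\Ent(g)=\E[g\log g]-\E g\log\E g$ with $g=u^q$ --- into $\tfrac{q'}{q^2\E u^q}\Ent(u^q)$, while $\partial_t u=Lu$ contributes $-\tfrac1{\E u^q}\langle u^{q-1},-Lu\rangle$, giving
\[
q^2(\E u^q)\,\frac{d}{dt}\log F=q'\,\Ent(u^q)-q^2\,\langle u^{q-1},-Lu\rangle.
\]

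Now I would invoke the \emph{Stroock--Varopoulos inequality} $\langle u^{q-1},-Lu\rangle\ge\frac{4(q-1)}{q^2}\langle u^{q/2},-Lu^{q/2}\rangle$, which on the cube follows by summing over edges the pointwise estimate $(a^{q-1}-b^{q-1})(a-b)\ge\frac{4(q-1)}{q^2}(a^{q/2}-b^{q/2})^2$ (valid for $a,b>0$, $q>1$), and apply the log-Sobolev inequality to $v=u^{q/2}$: $\langle v,-Lv\rangle\ge\tfrac{\alpha n}{2}\Ent(v^2)=\tfrac{\alpha n}{2}\Ent(u^q)$. Since both Dirichlet-form terms are nonnegative, these combine to
\[
q^2(\E u^q)\,\frac{d}{dt}\log F\le\bigl(q'-2\alpha n(q-1)\bigr)\Ent(u^q),
\]
and the defining ODE $q'=2\alpha n(q-1)$ kills the right-hand side. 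Hence $F(t)\le F(0)=\|f\|_p$, i.e.\ $\|T_\rho f\|_{1+(p-1)\rho^{-2\alpha n}}\le\|f\|_p$; by monotonicity of norms this gives $\|T_\rho f\|_q\le\|f\|_p$ whenever $q-1\le(p-1)\rho^{-2\alpha n}$, i.e.\ $\rho^{-2\alpha n}\ge\frac{q-1}{p-1}$ (the cases $\rho\in\{0,1\}$ being immediate).

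For the ``only if'' direction I would reuse the constant-perturbation computation from the $\alpha\le\lambda/2$ discussion: with $f=1+\varepsilon g$, $\E g=0$, Taylor expansion yields $\|T_\rho f\|_q=1+\tfrac{q-1}{2}\varepsilon^2\|T_\rho g\|_2^2+O(\varepsilon^3)$ and $\|f\|_p=1+\tfrac{p-1}{2}\varepsilon^2\|g\|_2^2+O(\varepsilon^3)$, so $\|T_\rho\|_{p\to q}\le1$ forces $(q-1)\|T_\rho g\|_2^2\le(p-1)\|g\|_2^2$ for every mean-zero $g$; taking $g=x_1$ (which maximizes $\|T_\rho g\|_2^2/\|g\|_2^2=\rho^2$) gives $(q-1)\rho^2\le p-1$, and since $\alpha=1/n$ on the cube this is exactly $\rho^{-2\alpha n}\ge\frac{q-1}{p-1}$. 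I expect the one genuinely non-routine ingredient to be the Stroock--Varopoulos pointwise inequality --- a one-variable calculus fact that degenerates to $(q-2)^2\ge0$ as $b\to0$ --- together with the bookkeeping needed to differentiate $F$ legitimately and to control $u^{q-1}\log u$ near $u=0$ when $1<q<2$, both disposed of by the reduction to strictly positive $f$.
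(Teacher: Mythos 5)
Your proof is correct and, for the direction the paper actually writes out, it is essentially the same argument in the $t=-\log\rho$ parametrization: your Stroock--Varopoulos step is exactly \prettyref{lem:dirichlet-convexity} (via $\langle g,-Lh\rangle=\tfrac n2\D(g,h)$ on the cube), and the final application of the log-Sobolev inequality to $u^{q/2}$ together with the ODE $q'=2\alpha n(q-1)$ is the general-$p$ form of the paper's choice $q=1+\rho^{-2\alpha n}$ at $p=2$. You go beyond the written proof in the places it waves off with ``similar techniques'': you treat arbitrary $p>1$ directly, you make the positivity/differentiability reduction explicit (needed since $1<q<2$ can occur when $p<2$), and you actually prove the ``only if'' direction by second-order perturbation at constants with $g=x_1$, which is legitimate and non-circular here because $\alpha=1/n$ for the cube is established independently in the two-point and tensoring sections.
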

Stated differently, $\|T_{1-\epsilon}f\|_{q}\leq\|f\|_{2}$ when $q\leq(1-\epsilon)^{-2}+1\approx 2 + 2\epsilon$.  Thus
to prove hypercontractive inequalities on the discrete cube, it suffices to bound the log-Sobolev constant.  We shall
prove this claim for $p=2$, which turns out to imply the general version.

\begin{proof}[Proof of~\prettyref{thm:gross}]
  We shall prove that $\|T_{\rho}f\|_{q}\leq\|f\|_{2}$ for $q=1+\rho^{-2\alpha n}$; the remainder of the theorem can be
  shown using similar techniques.  As we observed before, this inequality is tight when $\rho=1$, so it suffices to show
  that $\frac{d}{d\rho}\|T_{\rho}f\|_{q}\geq0$ for $0\leq\rho\leq1$. For notational convenience, let
  $G=\|T_{\rho}f\|_{q}^{q}$.  Then
  \[
  \|T_{\rho}f\|_{q}'=(G^{1/q})'=q^{-2}G^{(1/q)-1}\left(qG'-q'G\log
  G\right).
  \]
  Now we use the fact that $G=\E(T_{\rho}f)^{q}$ to get
  \[
  G'=q\E\left[(T_{\rho}f)^{q-1}(T_{\rho}f)'\right]+q'\E\left[(T_{\rho}f)^{q}
    \log(T_{\rho}f)\right].
  \]
  Applying \prettyref{lem:noise-energy} and simplifying, we get
  \[
  qG'-q'G\log G=q'\Ent\left((T_{\rho}f)^{q}\right)
  +\frac{nq^{2}}{2\rho}\D\left((T_{\rho}f)^{q-1},T_{\rho}f\right).
  \]
  We use \prettyref{lem:dirichlet-convexity} to handle the second term, and plug in $q=1+\rho^{-2\alpha n}$ to get
  \[
  qG'-q'G\log G=n \rho^{-2\alpha n - 1} \bigl[ \D\bigl( (T_{\rho}f)^{q/2},(T_{\rho}f)^{q/2}\bigr) -
  \Ent\left((T_{\rho}f)^{q}\right)\bigr],
  \]
  whose positivity we are guaranteed by the log-Sobolev inequality applied to $(T_{\rho}f)^{(q-1)/2}$.
\end{proof}
\begin{lem}
  \label{lem:noise-energy}For any $f,g \colon \{-1,1\}^{n}\to\R$, $\langle g, \frac{d}{d\rho}(T_{\rho}f)\rangle
  =\frac{n}{2\rho}\D(g,T_{\rho}f)$.
\end{lem}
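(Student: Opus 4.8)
The plan is to peel $T_{\rho}$ apart into edge functions and then recognize the Dirichlet form. First I would differentiate the convolution: since $T_{\rho}f=\BG_{\rho}*f$ and convolution is linear in each argument, $\frac{d}{d\rho}(T_{\rho}f)=(\frac{d}{d\rho}\BG_{\rho})*f$, and \prettyref{lem:noise-deriv} turns this into $\frac{1}{\rho}\BG_{\rho}*(\sum_{i}h_{i})*f$. Commutativity and associativity of convolution let me regroup this as $\frac{1}{\rho}\sum_{i}h_{i}*(\BG_{\rho}*f)=\frac{1}{\rho}\sum_{i}h_{i}*T_{\rho}f$. (If one prefers to avoid differentiating a convolution, the same identity drops out of differentiating the finite Fourier sum $T_{\rho}f=\sum_{S}\rho^{|S|}\hat f(S)\chi_{S}$ term by term and comparing with $\hat h_{i}$.)

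Next I would take the inner product with $g$ and move the $h_{i}$ across. By linearity, $\langle g,\frac{d}{d\rho}(T_{\rho}f)\rangle=\frac{1}{\rho}\sum_{i}\langle g,h_{i}*T_{\rho}f\rangle$. Now I use the fact recorded in the discussion of the edge functions that convolution by $h_{i}$ acts as an orthogonal projection, hence is self-adjoint and idempotent; in particular $\langle g,h_{i}*T_{\rho}f\rangle=\langle g*h_{i},T_{\rho}f\rangle=\langle g*h_{i},(T_{\rho}f)*h_{i}\rangle$.

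Finally I would invoke the identity $\D(u,v)=\frac{2}{n}\sum_{i}\langle u*h_{i},v*h_{i}\rangle$ noted right after the definition of the Dirichlet form, with $u=g$ and $v=T_{\rho}f$: this gives $\sum_{i}\langle g*h_{i},(T_{\rho}f)*h_{i}\rangle=\frac{n}{2}\D(g,T_{\rho}f)$, and combining with the previous display yields $\langle g,\frac{d}{d\rho}(T_{\rho}f)\rangle=\frac{n}{2\rho}\D(g,T_{\rho}f)$, as claimed.

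There is no real obstacle here; the argument is a short chain of identities already assembled in the preliminaries. The only thing to watch is the bookkeeping of constants --- the factor $1/\rho$ coming from \prettyref{lem:noise-deriv}, the factor $n/2$ built into the Dirichlet form, and the $1/2$'s hidden inside each $h_{i}$ --- together with the (routine) justification that differentiation passes through the convolution, which is automatic since everything in sight is a polynomial in $\rho$ of degree at most $n$.
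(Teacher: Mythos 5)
Your proposal is correct and follows essentially the same route as the paper's own proof: differentiate via \prettyref{lem:noise-deriv}, move the $h_i$'s across the inner product using the projection property, and invoke the identity $\D(g,T_{\rho}f)=\tfrac{2}{n}\sum_{i}\langle g*h_{i},(T_{\rho}f)*h_{i}\rangle$. The constants all check out, so there is nothing to add.
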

\begin{proof}
  Recalling \prettyref{lem:noise-deriv} and the projection property of the $h_{i}$s, we have
  \[
  \langle g,(T_{\rho}f)'\rangle
  = \langle g,\BG_{\rho}'*f\rangle
  = \biggl< g, \frac{1}{\rho} \BG_{\rho} * f * \sum_{i} h_{i} \biggr>
  = \frac{1}{\rho}\sum_{i}\langle g*h_{i},\BG_{\rho}*f\rangle
  = \frac{n}{2\rho}\D(g,T_{\rho}f).
  \qedhere
  \]
\end{proof}
\begin{lem}
  \label{lem:dirichlet-convexity}For any $f \colon \{-1,1\}^{n}\to\R$ and $q\geq2$,
  $\D(f,f^{q-1})\geq\frac{4(q-1)}{q^{2}}\D\left(f^{q/2},f^{q/2}\right)$.
\end{lem}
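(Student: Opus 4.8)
The plan is to peel off the Dirichlet form and reduce everything to a single two-point inequality, which I then prove by writing the relevant differences as integrals and applying Cauchy--Schwarz.

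First I would unwind the definition of $\D$. Both sides are averages over the edge set $E$: the left-hand side equals $\tfrac12\E_{xy\in E}\bigl(f(x)-f(y)\bigr)\bigl(f(x)^{q-1}-f(y)^{q-1}\bigr)$ and the right-hand side equals $\tfrac{2(q-1)}{q^{2}}\E_{xy\in E}\bigl(f(x)^{q/2}-f(y)^{q/2}\bigr)^{2}$. So it suffices to prove the pointwise bound
\[
(a-b)\bigl(a^{q-1}-b^{q-1}\bigr)\;\ge\;\tfrac{4(q-1)}{q^{2}}\,\bigl(a^{q/2}-b^{q/2}\bigr)^{2}
\]
for every pair of values $a=f(x)$, $b=f(y)$ occurring on an edge, and then average over $E$. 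Here I would flag the one point needing care: for non-integer $q$ the powers $f^{q-1}$, $f^{q/2}$ only make sense when $f\ge 0$, and indeed the lemma is used in \prettyref{thm:gross} after the reduction to nonnegative functions (via $|T_\rho f|\le T_\rho|f|$); so I take $a,b\ge 0$, and when $a=b$ both sides vanish, leaving the case $a>b\ge 0$.

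Next, for $a>b\ge 0$ I would write each of the three differences as an integral over $[b,a]$, all of whose integrands are finite there because $q\ge 2$:
\[
a-b=\int_{b}^{a}\!dt,\qquad a^{q-1}-b^{q-1}=(q-1)\!\int_{b}^{a}\! t^{q-2}\,dt,\qquad a^{q/2}-b^{q/2}=\tfrac{q}{2}\!\int_{b}^{a}\! t^{q/2-1}\,dt.
\]
Substituting these and cancelling the common factor $q-1$, the two-point inequality becomes
\[
\Bigl(\int_{b}^{a}\!dt\Bigr)\Bigl(\int_{b}^{a}\! t^{q-2}\,dt\Bigr)\;\ge\;\Bigl(\int_{b}^{a}\! t^{q/2-1}\,dt\Bigr)^{2},
\]
which is precisely Cauchy--Schwarz on $L^{2}[b,a]$ applied to the functions $1$ and $t^{q/2-1}$, since $\bigl(t^{q/2-1}\bigr)^{2}=t^{q-2}$. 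Averaging the resulting pointwise inequality over $E$ gives the lemma.

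I do not anticipate a substantive obstacle: the argument is a short integral-representation-plus-Cauchy--Schwarz computation. The only thing demanding a little attention is the bookkeeping around signs and degeneracies — requiring $f\ge 0$ so the fractional powers are defined, dispatching $a=b$ separately, and noting that the endpoint $b=0$ (with $q>2$) is harmless since the integrands $t^{q-2}$ and $t^{q/2-1}$ remain integrable on $[0,a]$.
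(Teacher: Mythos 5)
Your proposal is correct and follows essentially the same route as the paper: reduce to the two-point inequality $(a-b)(a^{q-1}-b^{q-1})\geq\frac{4(q-1)}{q^{2}}(a^{q/2}-b^{q/2})^{2}$ for $a>b\geq0$, express the differences as integrals of $1$, $t^{q-2}$, and $t^{q/2-1}$ over $[b,a]$, and compare via Cauchy--Schwarz (the paper phrases this comparison as ``convexity''). Your extra bookkeeping about nonnegativity of $f$ and the degenerate cases is a welcome clarification but not a departure from the paper's argument.
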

\begin{proof}
  It suffices to show that $(a^{q-1}-b^{q-1})(a-b)>\frac{4(q-1)}{q^{2}}(a^{q/2}-b^{q/2})^{2}$ for all $a>b\geq0$ and
  $q\geq2$. But observe that\begin{align*}
    \left(\int_{a}^{b}t^{q/2-1}dt\right)^{2} & =\frac{4}{q^{2}}(a^{q/2}-b^{q/2})^{2}\\
    \int_{a}^{b}t^{q-2}dt\ \int_{a}^{b}dt & =\frac{1}{q-1}(a^{q-1}-b^{q-1})(a-b)\end{align*} and the inequality between
  the integrals follows from convexity.
\end{proof}

\subsection{Two-point inequality}

We begin by showing that the log-Sobolev inequality holds for the uniform distribution on the two-point space $\{-1,1\}$
with $\alpha=2$. Without loss of generality, consider $f(x)=1+sx$. Then
\[
\Ent(f^{2})=\tfrac12 (1+s)^{2}\log(1+s)^2+\tfrac12 (1-s)^{2}\log(1-s)^2 - (1+s^{2})\log(1+s^{2})
\]
and $\D(f,f)=2s^{2}$. We shall show that $\phi(s)=\D(f,f)-\alpha\Ent(f^{2})$ is non-negative for $-1\leq s\leq1$. By
symmetry it suffices to consider $s\geq0$. But $\phi(0)=0$ and
\[
\phi'(s) = 4s + 2s\log(1+s^{2}) + 2(1-s)\log(1-s) - 2(1+s)\log(1+s),
\]
which is non-negative because $\phi'(0)=0$ and
\[
\phi''(s)=\frac{4s^{2}}{s^{2}+1}+2\log\frac{1+s^{2}}{1-s^{2}}\geq0.
\]

\subsection{Tensoring property}

\begin{thm}\label{thm:tensoring}
  Let $\alpha$ be the log-Sobolev constant of $\{-1,1\}^n$. Then the log-Sobolev constant of $\{-1,1\}^{2n}$ is
  $\alpha/2$.
\end{thm}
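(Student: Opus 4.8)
The plan is to derive the theorem from two ingredients: an exact decomposition of the Dirichlet form of $\{-1,1\}^{2n}$ into two copies of that of $\{-1,1\}^{n}$, one for each block of $n$ coordinates, and the subadditivity (tensorization) of the entropy functional $\Ent$. Write $\D_{m}$ for the Dirichlet form on $\{-1,1\}^{m}$. The factor $\tfrac12$ in the conclusion is a pure normalization artifact: $\D_{m}$ is an \emph{average} over the $m$ edge-directions, so splitting $2n$ coordinates into two blocks of $n$ replaces the weight $\tfrac{2}{2n}$ on each direction by $\tfrac12\cdot\tfrac{2}{n}$. Up to this renormalization the argument is the standard ``a product of spaces inherits the worse of their log-Sobolev constants''. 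I expect the subadditivity of entropy to be the only step that is not bookkeeping.

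First I would set up the block decomposition. Put $A=\{1,\dots,n\}$, $B=\{n+1,\dots,2n\}$, write $x=(x_A,x_B)$, and for a function $g$ on $\{-1,1\}^{2n}$ let $g_{x_B}=g(\cdot,x_B)$ be the corresponding function on $\{-1,1\}^{A}$ and $g_{x_A}=g(x_A,\cdot)$ on $\{-1,1\}^{B}$. Since convolution with $h_{i}$ touches only the $i$th coordinate, the identity $\D(f,g)=\tfrac{2}{m}\sum_{i}\langle f*h_{i},g*h_{i}\rangle$ on $\{-1,1\}^{m}$, applied with $m=2n$ and the sum split over $A$ and $B$, gives
\[
\D_{2n}(f,f)=\tfrac12\Bigl(\E_{x_B}\bigl[\D_{n}(f_{x_B},f_{x_B})\bigr]+\E_{x_A}\bigl[\D_{n}(f_{x_A},f_{x_A})\bigr]\Bigr).
\]
Applying the log-Sobolev inequality of $\{-1,1\}^{n}$, with constant $\alpha$, to each slice $f_{x_B}$ and to each slice $f_{x_A}$ yields
\[
\D_{2n}(f,f)\ \geq\ \tfrac{\alpha}{2}\Bigl(\E_{x_B}\bigl[\Ent(f_{x_B}^{2})\bigr]+\E_{x_A}\bigl[\Ent(f_{x_A}^{2})\bigr]\Bigr),
\]
where each inner entropy is computed on the relevant $n$-dimensional subcube.

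The crux is the \textbf{subadditivity of entropy}: for every $g\geq0$ on $\{-1,1\}^{2n}$,
\[
\Ent(g)\ \leq\ \E_{x_B}\bigl[\Ent(g_{x_B})\bigr]+\E_{x_A}\bigl[\Ent(g_{x_A})\bigr].
\]
I would prove this by conditioning. Normalizing $\E g=1$ and setting $g_{B}(x_B)=\E_{x_A}g$, the chain rule $\log g=\log(g/g_{B})+\log g_{B}$ splits $\E[g\log g]$ exactly as $\E_{x_B}[\Ent(g_{x_B})]+\Ent(g_{B})$, so it remains to check $\Ent(g_{B})\leq\E_{x_A}[\Ent(g_{x_A})]$. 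Since $g_{B}=\E_{x_A}[g_{x_A}]$, this is Jensen's inequality for the functional $h\mapsto\Ent(h)$, which is convex because the Gibbs/Donsker--Varadhan formula $\Ent(h)=\sup\{\E[h\psi]:\E[e^{\psi}]\leq1\}$ writes it as a supremum of linear functionals. Taking $g=f^{2}$ and combining with the previous display gives $\D_{2n}(f,f)\geq\tfrac{\alpha}{2}\Ent(f^{2})$, so the log-Sobolev constant of $\{-1,1\}^{2n}$ is at least $\alpha/2$.

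Finally, for the matching upper bound I would use one-block test functions. Choose $f^{(k)}\colon\{-1,1\}^{n}\to\R$ with $\D_{n}(f^{(k)},f^{(k)})/\Ent((f^{(k)})^{2})\to\alpha$ and regard each $f^{(k)}$ as a function on $\{-1,1\}^{2n}$ depending only on the $A$-coordinates; then $f^{(k)}_{x_A}$ is constant in $x_B$, so the decomposition identity collapses to $\D_{2n}(f^{(k)},f^{(k)})=\tfrac12\D_{n}(f^{(k)},f^{(k)})$, while $\Ent((f^{(k)})^{2})$ is unchanged because the law of $f^{(k)}$ is the same whether the $A$-coordinates form the whole cube or a marginal of it. Hence the Dirichlet-to-entropy ratio on $\{-1,1\}^{2n}$ tends to $\alpha/2$, giving the reverse inequality and the theorem. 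The main obstacle is really the subadditivity of entropy; the only other thing demanding care is the constant-tracking in the first display, i.e.\ making sure each $h_{i}$ is attributed to the correct block.
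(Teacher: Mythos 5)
Your proof is correct, and it departs from the paper's at precisely the step that matters. Both arguments split the entropy by conditioning on one block (your chain-rule identity $\E[g\log g]=\E_{x_B}[\Ent(g_{x_B})]+\Ent(g_B)$ is what the paper calls the conditional entropy formula) and both apply the $n$-dimensional log-Sobolev inequality to slices; the difference is how the leftover marginal term is treated. You bound $\Ent(g_B)$ by Jensen's inequality for the convex functional $\Ent$ (convexity coming from the Gibbs/Donsker--Varadhan variational formula), so the log-Sobolev inequality is only ever applied to slices of $f$ itself, and you get the full subadditivity statement $\Ent(g)\le\E_{x_B}[\Ent(g_{x_B})]+\E_{x_A}[\Ent(g_{x_A})]$ as a reusable lemma. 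The paper instead introduces the slice-norm function $g(x)=\|f(x,\cdot)\|_2$, applies the log-Sobolev inequality to $g$, and then needs a separate convexity (Minkowski/triangle-inequality) step to dominate $\D(g,g)$ by $\E_y\D_x(f(\cdot,y),f(\cdot,y))$. These are the two standard routes to tensorization: yours trades the Minkowski step for the variational characterization of entropy and avoids manipulating $\sqrt{\E_y f^2}$; the paper's avoids the variational formula. Your accounting of the factor $\tfrac12$ coming from the edge-averaged normalization of $\D$ is exact and matches the paper's $\E_x\D_y+\E_y\D_x\le2\D(f)$. Finally, you also prove the matching upper bound via test functions depending on one block only; the paper's proof establishes only $\D(f,f)\ge(\alpha/2)\Ent(f^2)$ and leaves the reverse direction to the earlier general bound $\alpha\le\lambda/2$, so your write-up proves the stated equality in a self-contained way.
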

When $n$ is a power of $2$, we can conclude inductively that $\alpha=1/n$; a proof along similar lines works for
arbitrary $n$ as well.
\begin{proof}[Proof of~\prettyref{thm:tensoring}]
  For any $f \colon \{-1,1\}^{n}\times\{-1,1\}^{n}\to\R$, set $g(x)=\|f(x,\cdot)\|_{2}$.  Then by the conditional
  entropy formula,
  \[
    \Ent(f^{2})
    \leq \Ent(g^{2})+\E_{x}\Ent_{y}(f(x,y)^{2})
    \leq \frac{\D(g, g) + \E_{x}\D_{y}(f(x,y), f(x, y))}{\alpha}
  \]
  and by convexity,
  \begin{align*}
    \D(g, g) &=
    \tfrac12 \E_{x \sim x'}(g(x)-g(x'))^{2} \leq
    \tfrac12 \E_{x \sim x'}\E_{y}\left[(f(x,y)-f(x',y))^{2}\right]
    = \E_{y}\D_x(f(x,y), f(x, y))
  \end{align*}
  where the notation $x \sim x'$ ranges over edges of $\{-1,1\}^{n}$. Taken together, these give
  \[
  \Ent(f^{2})
  \leq \frac{\E_{x}\D_y(f(x,y), f(x,y)) + \E_{y}\D_x(f(x,y), f(x, y))}{\alpha}
  \leq \frac{2\D(f)}{\alpha}
  = \frac{\D(f)}{\alpha/2}.
  \]
  as claimed.
\end{proof}

\subsection{Non-product groups}

Recall that we defined the Dirichlet form
\[
\D(f, g) = \tfrac{1}{2} \E_{u\sim v} (f(u) - f(v))(g(u) - g(v))
\]
for functions $f, g \colon \{-1,1\}^n \to \R$, but it makes sense for any regular graph if we sample $u, v$ uniformly
from the edges. Thus, given any family of regular graphs, we can ask if they satisfy a log-Sobolev inequality of the
form $\D(f, f) \geq \alpha \Ent(f)$ for all suitable $f$.

It turns out that the relationship between logarithmic Sobolev inequalities and hypercontractive noise operator
subgroups, as stated by Gross \cite{Gross}, holds for a wide class of spaces, not just the hypercube
$\{-1,1\}^n$. Diaconis and Saloff-Coste \cite{DSC} explored an intermediate between these two extremes of
specialization to give improved mixing time results for Markov chains on various graphs.

One of the first discrete applications of hypercontractivity was a celebrated theorem of Kahn, Kalai and Linial
\cite{KKL} relating the maximum influence of a function on the hypercube to its variance. In \prettyref{thm:schreier},
we discuss some recent work \cite{OWim} of O'Donnell and Wimmer generalizing the KKL theorem to apply to the wider class
of Schreier graphs associated with group actions (defined below).

An action of a group $G$ on a set $X$ is a homomorphism from $G$ to the group of bijections on $X$, and we write $x^{g}$
for the image of $x$ under the bijection for $g$. If $S$ is a set of generators for $G$, then the Schreier graph
$\Sch(G,S,X)$ has vertex set $X$ and edges $(x,x^{g})$ for all $x\in X$ and $g\in S$. It is known that every connected
regular graph of even degree can be obtained in this way \cite{GrossSch}. The definition of the Dirichlet form $\D$
generalizes without change, but to be able to derive a log-Sobolev inequality for this space, we must define the noise
operator $T_{\rho}$ in an appropriate fashion to satisfy the claim of \prettyref{lem:noise-deriv}: $\langle
g,\frac{d}{d\rho}(T_{\rho}f)\rangle \propto \frac{1}{\rho}\D(g,T_{\rho}f)$.

\section{Boolean-Valued Functions}

\subsection{Influences}

Write $x_{-i}$ for the collection of random variables $\{x_{1},\dotsc,x_{n}\}\setminus\{x_{i}\}$.  The influence of the
$i$th coordinate on a function $f \colon \{-1,1\}^{n}\to\R$ is given by 
\[
\Inf_{i}(f)=\E_{x_{-i}}\Var_{x_{i}}f(x)=
\E_{x_{-i}}\left[\E_{x_{i}}f(x)^{2}-(\E_{x_{i}}f(x))^{2}\right].
\]
When $f$ is Boolean-valued, this quantity is just the probability that changing $x_{i}$ changes $f(x)$. Writing $f$ in
the Fourier basis, we have $\E_{x_{-i}}\E_{x_{i}}f(x)^{2}= \E_{x}f(x)^{2}= \sum_{S}\hat{f}(S)^{2}$ and
$\E_{x_{-i}}(\E_{x_{i}}f(x))^{2}= \sum_{S\not\ni i}\hat{f}(S)^{2}$, so that $\Inf_{i}(f)=\sum_{S\ni i}\hat{f}(S)^{2} =
\E(f * h_i)^2$. In addition, we define the total influence $\Inf(f)=\sum_{i}\Inf_{i}(f)=\sum_{S}|S|\hat{f}(S)^{2}$.

\subsection{Structural results}

Boolean functions are natural combinatorial objects, but they were first studied from an analytical viewpoint in work on
voting and social choice. In this setting, a function $f \colon \{-1,1\}^{n}\to\{-1,1\}$ is viewed as a way to combine
the preferences of $n$ voters to yield the result of the election. This explains the notions of dictator or junta
functions, which depend on only one or a few of their coordinates, respectively. In this context it is also natural to
consider functions where no coordinate (``voter'') has a very large influence. Kahn, Kalai, and Linial \cite{KKL} first
introduced the Fourier analysis of Boolean functions as a technique in computer science. Their theorem establishes that
if a function is far from a constant (i.e.,~has variance at least a constant), then it must have a variable of influence
$\Omega(\frac{\log n}{n})$.  We state a strengthening of their original inequality due to Talagrand \cite{Talagrand}:
\begin{thm}[\cite{KKL,Talagrand}]
  For any $f \colon \{-1,1\}^{n}\to\{-1,1\}$,
  \[
  \sum_{i}\frac{\Inf_{i}(f)}{\log(1/\Inf_{i}(f))}\geq
  \Omega(1)\cdot\Var(f).
  \]
\end{thm}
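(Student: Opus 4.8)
The plan is to follow the classical Fourier-analytic argument of Kahn--Kalai--Linial, using hypercontractivity (Theorem~\ref{thm:gross}) as the key analytic input. Write $f_i = f * h_i$, so that $\widehat{f_i}(S) = \hat f(S)$ if $i \in S$ and $0$ otherwise, and $\Inf_i(f) = \|f_i\|_2^2 = \sum_{S \ni i} \hat f(S)^2$. Since $f$ is Boolean, $f_i(x) \in \{-1,0,1\}$ pointwise, so $\|f_i\|_p^p = \|f_i\|_2^2 = \Inf_i(f)$ for every $p \ge 1$. The goal is to lower bound $\sum_i \Inf_i(f) = \sum_S |S| \hat f(S)^2$ in a way sensitive to the level sets, while the hypothesis to exploit is that each individual $\Inf_i(f)$ is an $L^p$-norm to a high power, hence small in $L^2$ relative to lower norms.

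The main step is the hypercontractive bound. For $\rho \in (0,1)$, apply $T_\rho$ to $f_i$ and use $\|T_\rho f_i\|_2 \le \|f_i\|_p$ for $p = 1 + \rho^2$ (Theorem~\ref{thm:gross} with $q = 2$, $n$ the dimension, at the critical $\rho^{-2\alpha n} = 1/(p-1)$, i.e. $p - 1 = \rho^{2}$ since $\alpha n = 1$). This gives
\[
\sum_{S \ni i} \rho^{2|S|} \hat f(S)^2 = \|T_\rho f_i\|_2^2 \le \|f_i\|_p^2 = \Inf_i(f)^{2/p} = \Inf_i(f)^{2/(1+\rho^2)}.
\]
Summing over $i$ and interchanging, $\sum_S |S| \rho^{2|S|} \hat f(S)^2 \le \sum_i \Inf_i(f)^{2/(1+\rho^2)}$. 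The left side is comparable (up to the worst-case factor $|S|\rho^{2|S|}$, which we split: small $|S|$ handled by low-degree weight, large $|S|$ damped by $\rho^{2|S|}$) to a constant times $\Var(f) = \sum_{S \ne \emptyset} \hat f(S)^2$, provided $\rho$ is bounded away from $1$; concretely one shows $\sum_{S} |S| \rho^{2|S|} \hat f(S)^2 \ge c(\rho)\bigl(\Var(f) - (\text{degree-}\le d \text{ mass})\bigr)$ and absorbs the low-degree part using that degree-$1$ (or low-degree) Fourier mass itself forces a large individual influence. The cleanest route is the standard trick: either $\sum_i \Inf_i(f) \le K$ is bounded (low total influence), in which case choosing $\rho$ so that $\rho^{2} = 1 - 1/\log(1/\max_i \Inf_i)$ or summing a weighted family of the above inequalities over a geometric sequence of $\rho$'s produces the harmonic-type denominator $\log(1/\Inf_i(f))$; or the total influence is large and the bound is immediate.

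To get the exact form with $\log(1/\Inf_i(f))$ in the denominator, I would not fix a single $\rho$ but instead integrate the hypercontractive estimate against $d\rho$ (or sum over $\rho = 1 - 2^{-k}$). Starting from $\sum_S |S|\hat f(S)^2 = \int_0^1 \sum_S 2|S|\rho^{2|S|-1}\hat f(S)^2\,d\rho$ is awkward; instead use that for each coordinate $\int_0^1 \rho^{2|S|-1}\,d\rho \asymp 1/|S|$, so $\sum_{S \ni i}\hat f(S)^2 = \Inf_i(f) \asymp \int_0^1 \rho\bigl(\sum_{S\ni i}|S|\rho^{2|S|}\hat f(S)^2 \cdot \text{(weights)}\bigr)\ldots$; the precise bookkeeping shows $\Inf_i(f) \log(1/\Inf_i(f)) \gtrsim \sum_{S \ni i}|S|\,\hat f(S)^2$ is \emph{not} what we want — rather we want the reverse, so I run the inequality the other way: hypercontractivity gives $\sum_{S\ni i}\rho^{2|S|}\hat f(S)^2 \le \Inf_i(f)^{2/(1+\rho^2)} \le \Inf_i(f)\cdot\Inf_i(f)^{(1-\rho^2)/(1+\rho^2)}$, and choosing $\rho^2$ close to $1$ with $1-\rho^2 \asymp 1/\log(1/\Inf_i(f))$ makes $\Inf_i(f)^{(1-\rho^2)/(1+\rho^2)} \asymp 1$, so each high-$|S|$ tail contributes $O(\Inf_i(f))$, while the surviving bound $\sum_{S \ni i}\rho^{2|S|}\hat f(S)^2 \ge \rho^{2\cdot(\text{cutoff})}(\text{mass})$ recovers a constant fraction of $\Var$. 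Rearranging yields $\sum_i \Inf_i(f)/\log(1/\Inf_i(f)) \gtrsim \Var(f)$.

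The hard part will be the final aggregation: turning the per-coordinate hypercontractive bound (which involves an $i$-dependent optimal $\rho_i$) into a single clean sum over $i$ without losing the logarithmic denominator. This is the step where the Talagrand refinement over the original KKL statement lives, and it requires either a careful choice of a common $\rho$ together with a dyadic decomposition of coordinates by the size of $\Inf_i(f)$, or an integration argument as sketched above; I expect the bookkeeping there — in particular verifying that the low-degree Fourier mass does not cause trouble and that the constants combine — to be the only genuinely delicate point, the rest being a direct application of Theorem~\ref{thm:gross}.
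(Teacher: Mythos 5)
Your route---bounding $\|T_\rho (f*h_i)\|_2$ by $\|f*h_i\|_{1+\rho^2}$ via hypercontractivity and exploiting that $f*h_i$ takes values in $\{-1,0,1\}$, so that $\|f*h_i\|_p^p = \Inf_i(f)$ for every $p$---is the classical KKL/Talagrand argument, and it is genuinely different from the proof given in the paper, which follows Rossignol: there one decomposes $f - \E f = \sum_j f_j$ with $f_j = \sum_{S:\max S = j}\hat f(S)\chi_S$, applies the log-Sobolev inequality to each $f_j$, and in fact only derives the weaker conclusion $\max_i \Inf_i(f) \geq \Omega\bigl(\tfrac{\log n}{n}\bigr)\Var(f)$. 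A correct completion of your sketch would therefore prove more than the paper does.

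However, the aggregation step you defer is not mere bookkeeping, and the choices you sketch do not produce the theorem. Taking $1-\rho^2 \asymp 1/\log(1/\Inf_i(f))$ makes $\Inf_i(f)^{(1-\rho^2)/(1+\rho^2)} \asymp 1$, so your hypercontractive estimate degenerates to $\|T_\rho (f*h_i)\|_2^2 = O(\Inf_i(f))$, which holds trivially for every $\rho$ and carries no logarithmic gain; and summing over $i$ first, to get $\sum_S |S|\rho^{2|S|}\hat f(S)^2 \leq \sum_i \Inf_i(f)^{2/(1+\rho^2)}$, is the route to the original max-influence form of KKL, not to Talagrand's weighted sum. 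The missing idea is the identity $\Var(f) = \sum_{S\neq\emptyset}\hat f(S)^2 = \sum_i \sum_{S \ni i}\hat f(S)^2/|S|$, which reduces the theorem to the purely per-coordinate inequality $\sum_{S\ni i}\hat f(S)^2/|S| = O\bigl(\Inf_i(f)/(1+\log(1/\Inf_i(f)))\bigr)$. That inequality follows from your hypercontractive bound with a \emph{fixed} $\rho$ (say $\rho^2 = 1/2$) and an $i$-dependent degree cutoff $k_i \asymp \log(1/\Inf_i(f))$: the tail over $|S| > k_i$ contributes at most $\Inf_i(f)/k_i$, while the head is at most $\rho^{-2k_i}\|T_\rho(f*h_i)\|_2^2 \leq 2^{k_i}\Inf_i(f)^{4/3}$, which with $k_i = \tfrac16\log_2(1/\Inf_i(f))$ equals $\Inf_i(f)^{7/6} = O\bigl(\Inf_i(f)/\log(1/\Inf_i(f))\bigr)$ since $x^{1/6}\log(1/x)$ is bounded on $(0,1]$. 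With this identity there is no need for an $i$-dependent $\rho_i$, a dyadic decomposition of the coordinates, or an integration over $\rho$; one simply sums over $i$ (reading the denominator as $1+\log(1/\Inf_i(f))$ to handle coordinates whose influence is bounded away from $0$).
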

We can compare this to the Poincar\'e inequality on the cube, which can be stated as
\[
\sum_{i}\Inf_{i}(f)\geq\Omega(1)\cdot\Var(f).
\]
(In particular, there exists a variable of influence $\Omega\bigl(\tfrac{1}{n}\bigr)\Var(f)$.) The KKL theorem is a stronger
result of the same form: it is a comparison between a local and a global measure of variation. The proofs of KKL and
Talagrand used the hypercontractivity of the cube, but we present here a more recent proof due to Rossignol that uses
the log-Sobolev inequality instead. For simplicity we'll just show the weaker statement that the maximum influence is
$\Omega\bigl(\tfrac{\log n}{n}\bigr)\Var(f)$.
\begin{proof}
  Write $f-\E f=f_{1}+\dotsb+f_{n}$, where $f_{j}=\sum_{S:\max S=j}\hat{f}(S)\chi_{S}$.  For each $f_{j}$, the
  log-Sobolev inequality states that $\D(f_{j},f_{j}) \geq \alpha\Ent(f_{j}^{2}) = \frac1n \Ent(f_{j}^{2})$.  By writing
  $\D(f_{j},f_{j})$ in terms of the Fourier coefficients $\hat{f}(S)$, we can check that
  $\D(f,f)=\sum_{j=0}^{n}\D(f_{j},f_{j})$, so that we can sum all these inequalities to obtain
  \[
  n \D(f,f)\geq\sum_{j}\Ent(f_{j}^{2})=
  \underbrace{\sum_{j}\E\left[f_{j}^{2}\log(f_{j}^{2})\right]}_{A}
  +\underbrace{\sum_{j}\E f_{j}^{2}\log\frac{1}{\E f_{j}^{2}}}_{B}.
  \]
  In order to bound $B$, we begin by noting that
  \[
  \E f_{j}^{2}=\sum_{S:\max S=j}\hat{f}(S)^{2}\leq\sum_{S\ni
    j}\hat{f}(S)^{2}=\E(f*h_{j})^{2}
  \]
  where the $h_{j}$s are the edge functions we defined earlier. Letting $M(f) = \max_{j}\E(f*h_{j})^{2} = \max_j
  \Inf_j(f)$, we have
  \[
  B = \sum_{j} \E f_{j}^{2} \log \frac{1}{\E f_{j}^{2}} \geq \sum_j \E f_{j}^{2} \log \frac{1}{M(f)} =
  \Var(f)\log\frac{1}{M(f)}
  \]
  where we have used the orthogonality of the $f_{j}$s and the fact that $\Var(f)=\sum_{S\neq\emptyset} \hat f(S)^2$.

  To bound $A$, we split it up further:
  \[
  A=\underbrace{\sum_{j}\E\left[f_{j}^{2}
      \log(f_{j}^{2})\cdot1_{f_{j}^{2}\leq t}\right]}_{A_{1}}+
  \underbrace{\sum_{j}\E\left[f_{j}^{2}
      \log(f_{j}^{2})\cdot1_{f_{j}^{2}>t}\right]}_{A_{2}}.
  \]
  For $0 \leq t\leq1/e^{2}$, we have that $\sqrt{t} \log \sqrt{t}$ is a nonpositive decreasing function and therefore,
  \[
  A_{1} = 2 \sum_{j}\E\left[|f_{j}|\log|f_{j}| \cdot |f_{j}| 1_{f_{j}^{2} \leq t}\right] \geq 2 \sqrt{t}\log \sqrt{t}
  \sum_{j} \E| f_{j} \cdot 1_{f_{j}^2 \leq t} | \geq \sqrt{t} \log t \sum_{j} \E|f_{j}|.
  \]
  By comparing Fourier coefficients, it is easy to verify that $f_{j}=\E_{x_{j+1},\dotsc,x_{n}}(f*h_{j})$. Therefore, by
  convexity, $\E|f_{j}|\leq\E|f*h_{j}|.$

  Until now, the proof has made no use of the fact that $f$ takes on only Boolean values. Now we argue that because
  $f(x)\in\{-1,1\}$, we must have $(f*h_{j})(x)\in\{-1,0,1\}$, so that $\E|f*h_{j}|=\E(f*h_{j})^{2}$.  Plugging this
  into our bound for $A_{1}$ yields
  \[
  A_{1} \geq \sqrt{t} \log t \sum_{j} \E(f*h_{j})^{2} = \frac{n}{2} \sqrt{t} \log t \cdot \D(f,f).
  \]
  For $A_{2}$, note that $\log(\cdot)$ is increasing, so
  \[
  A_{2} \geq \log t\sum_j\E f_{_{j}}^{2}=\log t\Var f.
  \]
  Summing all these bounds gives us
  \[
  n \D(f, f) \geq \log \frac1{M(f)} \Var(f) + \frac n2 \sqrt t \log t \cdot \D(f, f) + \log t \cdot \Var(f).
  \]
  By the Poincar\'e inequality, $\D(f, f) \geq \frac 2n \Var(f)$, so we can set $t = \bigl( \frac{2\Var(f)}{ne\D(f,f)}
  \bigr)^2 \leq 1/e^2$. With this substitution, the above inequality becomes
  \[
  \frac{2}{e\sqrt t} \geq \log \frac{t^{1+1/e}}{M(f)}.
  \]
  Suppose $t \leq (\frac{4}{e\log n})^2$. Then
  \[
  \D(f, f) \geq \frac{2 \Var(f)}{en} \cdot \frac{e\log n}{4} = \Omega\Bigl(\frac{\log n}{n}\Bigr),
  \]
  and we know that $M(f) \geq 2 \D(f, f)$.  On the other hand, if $t > (\frac{4}{e \log n})^2$, then
  \[
  M(f) > t^{1+1/e} \exp\Bigl(\frac{-2}{e \sqrt t}\Bigr)
  = \Bigl(\frac{4}{e \log n}\Bigr)^{2+2/e} \exp \Bigl(\frac{-\log n}{2} \Bigr) \gg \frac{\log n}{n}. \qedhere
  \]
\end{proof}
We are now in a position to state the recent result of O'Donnell and Wimmer \cite{OWim} generalizing the KKL theorem to
Schreier graphs satisfying a certain technical property.
\begin{thm}[\cite{OWim}]\label{thm:schreier}
  Let $G$ be a group acting on a set $X$, $U\subseteq X$ be a union of conjugacy classes that generates $G$, and
  $\alpha$ be the log-Sobolev constant of $\Sch(G,X,U)$. Then for any $f \colon X\to\{-1,1\}$,
  \[
  \frac{\sum_{U}\Inf_{u}(f)}{\log(1/\max_{U}\Inf_{u}(f))}\geq
  \Omega(\alpha\Var(f)).
  \]
  In particular, there is some $u\in U$ such that
  $\Inf_{u}(f)\geq\Omega(\alpha \log\frac{1}{\alpha})\Var(f)$.
\end{thm}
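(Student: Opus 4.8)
The plan is to rerun Rossignol's log-Sobolev proof of hypercube KKL (given above) on $\Sch(G,X,U)$, replacing the inputs that rely on a product structure. Write $D_uf(x)=\tfrac12\bigl(f(x)-f(x^u)\bigr)$ for the edge-derivative along a generator $u\in U$, so that $\Inf_u(f)=\|D_uf\|_2^2$ (the Schreier analogue of $\Inf_i(f)=\E(f*h_i)^2$). Two structural facts transfer with no work: since $f$ is $\{-1,1\}$-valued, $D_uf$ is $\{-1,0,1\}$-valued, so $\E|D_uf|^p=\Inf_u(f)$ for every $p\ge1$; and expressing $\D$ as an average over edges gives $\sum_{u\in U}\Inf_u(f)=\tfrac{|U|}2\D(f,f)$. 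The Poincar\'e inequality $\D(f,f)\ge\lambda\Var(f)\ge2\alpha\Var(f)$ is also free, since $\alpha\le\lambda/2$ holds on any space carrying a suitable Dirichlet form (the perturbative computation shown earlier).

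The one genuinely new ingredient is an orthogonal decomposition $f-\E f=\sum_{u\in U}f_u$ standing in for the ``highest active coordinate'' splitting $f-\E f=\sum_j f_j$ of the cube: I need the $f_u$ pairwise orthogonal with $\sum_u\|f_u\|_2^2=\Var(f)$, compatible with the energy in the sense $\sum_u\D(f_u,f_u)=\D(f,f)$ (so the per-piece log-Sobolev inequalities can be summed), and satisfying $\|f_u\|_2^2\le\Inf_u(f)$ together with $\E|f_u|\le\E|D_uf|=\Inf_u(f)$. On the cube these come for free from the coordinate filtration, where $f_j=\E_{x_{j+1},\dots,x_n}[D_jf]$ realizes each $f_j$ as a conditional average of an edge-derivative; for a general group action there is no such filtration, and the hypothesis that $U$ is a union of conjugacy classes is what must be used to build a substitute — it is exactly what makes the class-averaging operators $f\mapsto\tfrac1{|C|}\sum_{u\in C}f(\cdot^u)$ commute with the whole $G$-action, which is in turn precisely the property that lets the noise operator $T_\rho$ be defined on $\Sch(G,X,U)$ with the energy-derivative identity $\langle g,\tfrac d{d\rho}(T_\rho f)\rangle\propto\tfrac1\rho\D(g,T_\rho f)$ highlighted just before the theorem. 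I expect producing this decomposition, and verifying the four properties simultaneously, to be the main obstacle; everything downstream is bookkeeping.

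Given the decomposition, the rest transcribes Rossignol's argument with $n$ replaced by $1/\alpha$. Summing $\D(f_u,f_u)\ge\alpha\Ent(f_u^2)$ over $u$ gives $\tfrac1\alpha\D(f,f)\ge\sum_u\Ent(f_u^2)=A+B$, where $B=\sum_u\E f_u^2\log(1/\E f_u^2)\ge\Var(f)\log(1/M)$ with $M=\max_{u\in U}\Inf_u(f)$; splitting $A=A_1+A_2$ according to whether $f_u^2\le t$ or $f_u^2>t$, and using the $\{-1,0,1\}$-valuedness of $D_uf$ with $\E|f_u|\le\Inf_u(f)$, gives $A_1\ge\sqrt t\log t\sum_u\Inf_u(f)=\tfrac{|U|}2\sqrt t\log t\,\D(f,f)$ and $A_2\ge\Var(f)\log t$; choosing $t\asymp\bigl(\alpha\Var(f)/\D(f,f)\bigr)^2$ to absorb the $A_1$ term and invoking Poincar\'e collapses this to an inequality of the shape $\tfrac c{\sqrt t}\ge\log\bigl(t^{1+1/e}/M\bigr)$, whose two-case analysis (on whether $t$ lies below a suitable threshold) yields $\max_{u\in U}\Inf_u(f)\ge\Omega(\alpha\log\tfrac1\alpha)\Var(f)$, the ``in particular'' claim. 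The displayed inequality of the theorem is the sharper, Talagrand-type bound, which I would get by the refinement that upgrades weak KKL to Talagrand: assign a separate threshold $t_u$ to each piece and carry the per-coordinate logarithms $\log(1/\Inf_u(f))$ all the way through. This last point is where I would be most careful — on the cube the coefficients $1/\alpha$ and $|U|/2$ above are comparable, but for a general Schreier graph $\sum_u\Inf_u(f)=\tfrac{|U|}2\D(f,f)$ can be far smaller than $\Var(f)$, so the absorption of $A_1$, and hence the choice of $t$, must be done with the per-coordinate bookkeeping to work uniformly across all regimes.
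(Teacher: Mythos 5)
First, a point of comparison: the paper does not prove \prettyref{thm:schreier} at all --- it is stated as a cited result of O'Donnell and Wimmer, immediately after the cube-only Rossignol proof --- so your proposal cannot ``match'' a paper proof; it has to stand on its own. It does not, because the one ingredient you correctly identify as the crux is exactly the ingredient you do not supply. The generator-indexed orthogonal decomposition $f-\E f=\sum_{u\in U}f_u$ with all four properties (pairwise orthogonality summing to $\Var(f)$, energy compatibility $\sum_u\D(f_u,f_u)=\D(f,f)$, $\|f_u\|_2^2\le\Inf_u(f)$, and $\E|f_u|\le\E|D_uf|$) is not a technical transcription problem: on the cube it comes from an \emph{ordering} of the coordinates and the product filtration, via $f_j=\E_{x_{j+1},\dotsc,x_n}(f*h_j)$, and both the orthogonality and the bound $\E|f_j|\le\E|f*h_j|$ (conditional expectation contracts $L^1$) are artifacts of that filtration. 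A general Schreier graph has no analogous chain of sub-sigma-algebras indexed by generators, and the conjugacy-class hypothesis does not produce one: class-averaging operators commute with the Laplacian and make the noise semigroup well defined (the property the paper highlights before the theorem), but they are indexed by conjugacy classes, not by individual generators, and they give no reason for a decomposition with $\sum_u\D(f_u,f_u)=\D(f,f)$ to exist. Indeed the actual O'Donnell--Wimmer argument does not go through a Rossignol-style decomposition at all; it applies hypercontractivity (equivalently, the log-Sobolev inequality via Gross) directly to the $\{-1,0,1\}$-valued directional derivatives $D_uf$, using conjugation-invariance of $U$ to exchange, on average over $U$, the semigroup with the directional derivatives. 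So the heart of the theorem is precisely the step you defer, and your fallback heuristic for it points in a direction that is not known to work.

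Second, even granting the decomposition, the quantitative bookkeeping you flag is a genuine failure, not a caution. In the cube proof the prefactor of the log-Sobolev side is $n=1/\alpha$ and the prefactor of the $A_1$ term is $n/2=|U|/2$, and the argument silently uses that these coincide. In general $|U|$ can be far larger than $1/\alpha$, and with your choice $t\asymp(\alpha\Var(f)/\D(f,f))^2$ the term $A_1\ge\tfrac{|U|}{2}\sqrt t\log t\,\D(f,f)$ is of order $|U|\alpha\Var(f)\log t$, which is not absorbed by $\tfrac1\alpha\D(f,f)$ unless $|U|\alpha=O(1)$; shrinking $t$ to compensate injects a loss $\Var(f)\log(|U|\alpha)$ through $A_2$ that the stated theorem does not tolerate. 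The same mismatch infects the ``in particular'' step and the promised upgrade to the Talagrand-type displayed inequality, which you only gesture at. So the proposal, as written, establishes neither the displayed inequality nor the weak form for general $\Sch(G,X,U)$; it is a plan whose two open items are the theorem's actual content.
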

For an Abelian group such as $\Z_{2}^{n}$ (the cube), every group element is in a conjugacy class by itself, so the
extra condition on $U$ is vacuous. Using $\alpha=\Omega(\frac{1}{n})$ for the cube, we recover the original KKL
theorem. O'Donnell et al.~apply the generalized result to the non-Abelian group $S_{n}$ of permutations on $[n]$,
generated by transpositions and acting on the family $\binom{[n]}{k}$ of $k$-subsets of $[n]$. By viewing these families
as sets of $n$-bit strings, they recover a ``rigidity'' version of the Kruskal-Katona theorem that states (roughly) that
if a subset of a layer of a cube has a small expansion to the layer above it, then it must be correlated to some
dictator function.

\paragraph*{Coding theoretic interpretation.} In the \emph{long code}, an integer $i \in [n]$ is encoded as the dictator
function $(x_1, \dotsc, x_n) \mapsto x_i$. By using many more bits ($2^n$ rather than $\log n$) of redundant storage, we
hope to be able to recover from corruptions in the data. The theorem tells us that as long as the corrupted version of
an encoding is far from a constant function, it can be decoded to a coordinate whose influence is $\Omega(\log n)$ times
the average influence. Since every coordinate's influence is nonnegative, only $O(\log n)$ coordinates can have
influence this large. Thus, we have a {}``small'' set of candidate long codes to which we might decode the word. To
complete this picture, we'd like to understand how far the word can be from functions that depend only on these
coordinates; the following theorem of Friedgut, which we state without proof, furnishes this information.
\begin{thm}[\cite{Friedgut}]
  For every $f \colon \{-1,1\}^{n}\to\{-1,1\}$ and $0<\epsilon<1$, there is a function $g \colon
  \{-1,1\}^{n}\to\{-1,1\}$ depending on at most $\exp\bigl(\frac{2 + o(1)}{\epsilon n} \Inf(f)\bigr)$ variables such
  that $\E|f-g|\leq\epsilon$.
\end{thm}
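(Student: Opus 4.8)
The plan is to prove this (Friedgut's junta theorem) by the hypercontractivity argument, using Theorem~\ref{thm:gross} in the form $\|T_\rho g\|_2 \le \|g\|_{1+\rho^2}$ — the case $q=2$, $p=1+\rho^2$ of that theorem, since $\alpha n = 1$ on the cube. Fix $f$ and $\epsilon$, and introduce an influence threshold $\tau$ and a degree cutoff $k$, both to be chosen at the end. Let $J = \{\, i\in[n] : \Inf_i(f) > \tau \,\}$; since $\sum_i \Inf_i(f) = \Inf(f)$ we automatically have $|J| < \Inf(f)/\tau$, so the whole game is to take $\tau$ as large as the error budget permits while still approximating $f$ by a function of $x_J$ alone. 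Set $g = \operatorname{sgn}\bigl(\E[f\mid x_J]\bigr)$, where $\E[f\mid x_J] = \sum_{S\subseteq J}\hat f(S)\chi_S$ is the conditional average, a $[-1,1]$-valued $J$-junta. Conditioning on $x_J$ gives $\E|f-g| = 1 - \E\bigl|\E[f\mid x_J]\bigr|$, and since $|t| \ge t^2$ on $[-1,1]$ while $\E f^2 = 1$, this is at most $1 - \E\bigl(\E[f\mid x_J]\bigr)^2 = \sum_{S\not\subseteq J}\hat f(S)^2$. So it suffices to bound this Fourier tail by $\epsilon$. I would split it by degree, $\sum_{S\not\subseteq J}\hat f(S)^2 \le \sum_{|S|>k}\hat f(S)^2 + \sum_{i\notin J}\sum_{S\ni i,\ |S|\le k}\hat f(S)^2$; the first sum is at most $\tfrac{1}{k+1}\sum_S |S|\hat f(S)^2 = \tfrac{\Inf(f)}{k+1}$ (using $\hat f(S)^2\le\tfrac{|S|}{k+1}\hat f(S)^2$ for $|S|>k$), which a choice of $k$ slightly above $\Inf(f)/\epsilon$ makes as small as desired, with no hypercontractivity needed.

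The second sum is where Theorem~\ref{thm:gross} and the hypothesis $f\in\{-1,1\}$ enter. Writing $\Inf_i^{\le k}(f) = \sum_{S\ni i,\ |S|\le k}\hat f(S)^2$ and using that $T_\rho$ commutes with convolution by $h_i$,
\[
\rho^{2k}\,\Inf_i^{\le k}(f)\ \le\ \sum_{S\ni i}\rho^{2|S|}\hat f(S)^2\ =\ \|T_\rho(f*h_i)\|_2^2\ \le\ \|f*h_i\|_{1+\rho^2}^2 ,
\]
the last inequality by Theorem~\ref{thm:gross}. Since $f$ is Boolean, $f*h_i$ takes values in $\{-1,0,1\}$, so $\|f*h_i\|_r^r = \E(f*h_i)^2 = \Inf_i(f)$ for every $r\ge1$ and the right-hand side equals $\Inf_i(f)^{2/(1+\rho^2)}$. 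Thus $\Inf_i^{\le k}(f)\le\rho^{-2k}\Inf_i(f)^{2/(1+\rho^2)}$, and because the exponent $2/(1+\rho^2)$ exceeds $1$, for $i\notin J$ (where $\Inf_i(f)\le\tau$) this is at most $\rho^{-2k}\,\tau^{(1-\rho^2)/(1+\rho^2)}\,\Inf_i(f)$. Summing over $i\notin J$ bounds the second sum by $\rho^{-2k}\,\tau^{(1-\rho^2)/(1+\rho^2)}\,\Inf(f)$, which is small exactly because $\tau$ appears with a positive exponent.

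Combining the pieces, $\E|f-g| \le \tfrac{\Inf(f)}{k+1} + \rho^{-2k}\,\tau^{(1-\rho^2)/(1+\rho^2)}\,\Inf(f)$. To conclude I would take $k+1$ just above $\Inf(f)/\epsilon$ so the first term is a $(1-o(1))$-fraction of $\epsilon$, then pick $\tau$ so the second term carries the rest; solving for $\log(1/\tau)$ gives $\log(1/\tau) = \Theta\!\bigl(\tfrac{1+\rho^2}{1-\rho^2}\,k\log\tfrac1\rho\bigr) = O(\Inf(f)/\epsilon)$, whence $|J| \le \Inf(f)/\tau = \exp\!\bigl(O(\Inf(f)/\epsilon)\bigr)$; letting $\rho\to1$, where $\tfrac{(1+\rho^2)\log(1/\rho)}{1-\rho^2}\to1$, and tuning the slack sharpens the constant in the exponent to the claimed value. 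The main obstacle is the hypercontractive estimate: the whole theorem hinges on upgrading the trivial $\Inf_i^{\le k}(f)\le\Inf_i(f)$ to the super-linear $\Inf_i^{\le k}(f)\le\rho^{-2k}\Inf_i(f)^{2/(1+\rho^2)}$, which is what makes the coordinates outside $J$ collectively negligible; a secondary difficulty is the joint optimization of $(\rho,k,\tau)$ needed to pin down the precise $(2+o(1))$ constant rather than merely $O(1/\epsilon)$.
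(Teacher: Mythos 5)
The paper states Friedgut's theorem explicitly \emph{without} proof, so there is no in-paper argument to compare yours against; judged on its own, your proposal is the standard hypercontractivity proof (essentially Friedgut's original one) and its main steps are sound. The reduction $\E|f-g| = 1-\E\bigl|\E[f\mid x_J]\bigr| \leq \sum_{S\not\subseteq J}\hat f(S)^2$ for $g=\operatorname{sgn}(\E[f\mid x_J])$ is correct, the degree split with $\sum_{|S|>k}\hat f(S)^2\leq \Inf(f)/(k+1)$ is correct, and the key estimate is legitimate: $\rho^{2k}\Inf_i^{\leq k}(f)\leq \|T_\rho(f*h_i)\|_2^2\leq \|f*h_i\|_{1+\rho^2}^2=\Inf_i(f)^{2/(1+\rho^2)}$, where the middle inequality is exactly the admissible case $q=2$, $p=1+\rho^2$ of Theorem~\ref{thm:gross} with $\alpha=1/n$, and the last equality uses that $f*h_i$ is $\{-1,0,1\}$-valued because $f$ is Boolean. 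The final optimization is only sketched, but it does work: taking $k\approx\Inf(f)/\epsilon$, one needs $\log(1/\tau)\geq\frac{1+\rho^2}{1-\rho^2}\bigl[2k\log(1/\rho)+\log\frac{\Inf(f)}{\Theta(\epsilon)}\bigr]$, and letting $\rho\to1$ at a rate depending on $\Inf(f)/\epsilon$ (so that the second term is $o(k)$ while the prefactor of $2k$ tends to $2$) gives $|J|\leq\Inf(f)/\tau=\exp\bigl((2+o(1))\Inf(f)/\epsilon\bigr)$.

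One discrepancy worth flagging: what you prove is the standard form of Friedgut's theorem, with exponent $\frac{2+o(1)}{\epsilon}\Inf(f)$, whereas the statement as printed in the paper has $\epsilon n$ in the denominator. With the paper's definition $\Inf(f)=\sum_i\Inf_i(f)$, the printed form cannot be correct (it would make every Boolean function, e.g.\ majority with $\Inf=\Theta(\sqrt n)$, $\epsilon$-close to an $O(1)$-junta); the extra $n$ is evidently a misnormalization, and your argument establishes the correct statement rather than the literal one. So your claim to recover ``the claimed value'' should be read as recovering the constant $2+o(1)$ over $\epsilon$, not the printed bound.
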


\section{Gaussian isoperimetry and an algorithmic application}

Hypercontractive inequalities were first investigated in the context of Gaussian probability spaces, for their
applications to quantum field theory. The following simple proof reduces the continuous Gaussian hypercontractive
inequality to its discrete counterpart on the cube.

\subsection{From the central limit theorem to Gaussian hypercontractivity}
\begin{thm}[\cite{Gross}]
  Let $x\in\R$ be normally distributed, i.e.,
  \[
  \Pr[x\in A]= \frac{1}{\sqrt{2\pi}}\int_{A}
  \exp\left(-\frac{x^{2}}{2}\right)\,dx.
  \]
  Then for a smooth function $f \colon \R\to\R$, the random variable $F=f(x)$
  satisfies
  \[
  \D(F,F)\geq\alpha\Ent(F^{2})
  \]
  with $\alpha=1$ and
  \[
  \D(F,G)=\frac12 \left<{\frac{dF}{dx},\frac{dG}{dx}}\right> .
  \]
\end{thm}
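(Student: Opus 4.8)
The plan is to deduce the Gaussian log-Sobolev inequality from the discrete one on $\{-1,1\}^n$ via the central limit theorem, following the hint in the section heading. First I would reduce to a convenient dense class of test functions: it suffices to prove $\D(F,F) \ge \Ent(F^2)$ for $f$ a bounded smooth function with bounded derivatives (so that all the expectations below converge and dominated convergence applies), since a standard approximation argument extends this to general smooth $f$ for which both sides are finite. Fix such an $f$.

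The core construction is as follows. For each $m$, let $x^{(m)} = (x_1,\dotsc,x_m) \in \{-1,1\}^m$ be uniform, and define $S_m = (x_1 + \dotsb + x_m)/\sqrt{m}$, which by the CLT converges in distribution to a standard Gaussian $x$. Set $g_m \colon \{-1,1\}^m \to \R$ by $g_m(x^{(m)}) = f(S_m)$. The strategy is to apply the discrete log-Sobolev inequality on $\{-1,1\}^m$, namely $\D^{(m)}(g_m,g_m) \ge \frac1m \Ent((g_m)^2)$ (using $\alpha = 1/m$ from \prettyref{thm:tensoring}), and then let $m \to \infty$. The entropy side is easy: $\Ent((g_m)^2) = \E[f(S_m)^2 \log f(S_m)^2] - \E[f(S_m)^2] \log \E[f(S_m)^2]$ converges to $\Ent(F^2)$ by the CLT and boundedness of $f$ (the map $(u,v) \mapsto u^2 \log(u^2/v)$ extended continuously is bounded on the relevant range). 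The real content is showing $m\,\D^{(m)}(g_m,g_m) \to \frac12 \E[(f'(x))^2]$. By the edge-function formula from the preliminaries, $\D^{(m)}(g_m,g_m) = \frac2m \sum_{i=1}^m \E[(g_m * h_i)^2]$, and $(g_m * h_i)(x^{(m)}) = \tfrac12(f(S_m) - f(S_m - 2x_i/\sqrt m))$. A first-order Taylor expansion gives $g_m * h_i \approx \tfrac{x_i}{\sqrt m} f'(S_m) + O(1/m)$, so $\E[(g_m*h_i)^2] \approx \tfrac1m \E[f'(S_m)^2]$, whence $m\,\D^{(m)}(g_m,g_m) = 2\sum_i \E[(g_m*h_i)^2] \approx 2\E[f'(S_m)^2]$. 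Wait — I need the constant to come out to $\frac12$, so let me keep track carefully: the finite-difference quotient here is $\tfrac12(f(S_m) - f(S_m - 2x_i/\sqrt m))$, which is $\tfrac{x_i}{\sqrt m}f'(S_m) + O(1/m)$ since $x_i^2 = 1$; squaring and summing over $i$ and multiplying by $\tfrac2m \cdot m = 2$ gives $2\E[f'(S_m)^2] + o(1)$, which converges to $2\E[f'(x)^2]$. This is off by a factor of $4$ from the claimed $\D(F,F) = \tfrac12\E[f'(x)^2]$, so the resolution must be that the discrete Dirichlet form in the CLT scaling is being compared against the step size $2/\sqrt m$ rather than $1/\sqrt m$ — I would recheck the normalization of $h_i$ and the factor in $\D$, and if needed rescale $S_m$ or absorb the constant; in any case the key point is that $m\,\D^{(m)}(g_m,g_m)$ converges to a constant multiple of $\E[f'(x)^2]$, and matching that constant to $\alpha = 1$ on the Gaussian side pins down the correct normalization of the Gaussian Dirichlet form.

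The main obstacle is making the Taylor-expansion limit rigorous and uniform: I need to control the error terms in $\E[(g_m * h_i)^2] = \tfrac1m \E[f'(S_m)^2] + o(1/m)$ uniformly in $i$ so that the sum over the $m$ coordinates still has error $o(1)$. This is where boundedness of $f''$ (or a Lipschitz bound on $f'$) enters: writing $f(u) - f(u - h) = h f'(u) - \tfrac{h^2}{2}f''(\xi)$ with $|h| = 2/\sqrt m$ gives an error bounded by $C/m$ pointwise, and squaring produces cross terms of order $\tfrac1m \cdot \tfrac1{\sqrt m}$ and a square term of order $\tfrac1{m^2}$; summing $m$ of them gives total error $O(1/\sqrt m) \to 0$. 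Then $\E[f'(S_m)^2] \to \E[f'(x)^2]$ by the CLT (applied to the bounded continuous function $f'^2$). Passing all pieces through the inequality $m\,\D^{(m)}(g_m,g_m) \ge \Ent((g_m)^2)$ and taking $m\to\infty$ yields $c\,\E[f'(x)^2] \ge \Ent(F^2)$ for the appropriate constant $c$, i.e. $\D(F,F) \ge \Ent(F^2)$ with $\alpha = 1$. Finally I would remove the boundedness assumptions on $f$ and its derivatives by a truncation/approximation argument, completing the proof.
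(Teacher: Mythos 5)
Your proposal follows essentially the same route as the paper's proof: approximate the Gaussian by $S_m=(x_1+\dotsb+x_m)/\sqrt m$, apply the discrete log-Sobolev inequality to $f\circ S_m$, let the CLT handle the entropy side, and Taylor-expand the discrete differences $h_i*(f\circ S_m)$ using smoothness of $f$ to recover $\E[f'(x)^2]$ on the energy side (your uniform error control via a bound on $f''$ just makes explicit what the paper leaves implicit). The factor-of-$4$ discrepancy you flag is not a defect of your argument but a normalization inconsistency in the paper itself --- its proof uses $\D=\tfrac12\sum_i\E[(h_i*\cdot)^2]$ together with $\D\ge\Ent$, which does not match its earlier convention $\D=\tfrac2n\sum_i\langle\,\cdot*h_i,\cdot*h_i\rangle$ with $\alpha=1/n$ --- and your limiting inequality $2\,\E[f'(x)^2]\ge\Ent(F^2)$ is the correct form of the Gaussian log-Sobolev inequality.
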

\begin{proof}
  We shall approximate the Gaussian distribution by a weighted sum of Bernoulli variables. Let $y\in\{-1,1\}^{k}$ be
  uniformly distributed, and set $g(y)=\frac{y_{1}+\dotsb+y_{k}}{\sqrt{k}}$. By the log-Sobolev inequality applied to
  $f\circ g(y)$, we have $\D\left(f\circ g(y),f\circ g(y)\right)\geq\Ent(f\circ g(y)^{2})$.  By the central limit
  theorem, the right side converges to $\Ent(f(x)^{2})=\Ent(F^{2})$ as $k\to\infty$, so it remains to show that the left
  side converges to $\D(F,F)$ as well. Let $y|_{y_{i}=\theta}$ be the value obtained by replacing the $i$th coordinate
  of $y$ with the value $\theta$, and observe that $g(y|_{y_{i}=1})-g(y|_{y_{i}=-1})=2/\sqrt{k}$. Then, using the
  smoothness of $f$, we have
  \[
  \left|\left(h_{i}*(f\circ g)\right)(y)\right|=
  \frac{1}{2}\left|f\circ g(y|_{y_{i}=1})-f\circ
  g(y|_{y_{i}=-1})\right|=\frac{1}{\sqrt{k}}\left|f'\circ
  g(y)\right|+o\left(\frac{1}{\sqrt{k}}\right),
  \]
  so that
  \[
  \D\left(f\circ g(y),f\circ g(y)\right)
  = \frac12 \E_{y}\left[\sum_{i}\left(h_{i}*(f\circ g)\right)(y)^{2}\right]
  = \frac12 \E_{y}\left[f'\circ g(y)^{2}+o(1)\right].
  \]
  The second term vanishes as $k\to\infty$, and the first term converges to $\D(F,F)$ by the Central Limit Theorem.
\end{proof}

The tensoring property of log-Sobolev inequalities lets us extend this result to Gaussian distributions over $\R^d$. We
are also interested in the corresponding noise operator $S_\rho$, known as the Ornstein-Uhlenbeck operator, which is
given by
\[
S_\rho f(x) = \E_{z \sim \mathcal{N}(0,1)^d} f(\rho x + (1-\rho^2)^{1/2} z).
\]
\prettyref{thm:gross} has an analog in this setting, which lets us conclude that every function $f \colon \R^d \to \R$
satisfies $\| S_\rho f \|_q \leq \| f \|_p$ where $q > p \geq 1$ and $\rho^{-2} \geq (p-1)/(q-1)$.
\subsection{Reverse hypercontractivity and isoperimetry}

In 1982, Borell showed a \emph{reversed} inequality of a similar form when $q < p < 1$:

\begin{thm}[Reverse hypercontractivity, \cite{Borell}]
  Fix $q < p \leq 1$ and $\rho \geq 0$ such that $\rho^{-4} \geq (p-1)/(q-1)$. Then for any positive-valued function $f
  \colon \R^d \to \R^+$, we have $\| S_\rho f \|_q \geq \| f \|_p$.
\end{thm}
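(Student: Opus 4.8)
The plan is to imitate, almost line for line, the passage from the log-Sobolev inequality to forward hypercontractivity in \prettyref{thm:gross}, keeping careful track of signs now that the exponents lie below $1$, and to reduce the Gaussian statement to the cube exactly as was done for ordinary Gaussian hypercontractivity in Section 4.1. So I would first reduce to proving, for every $n$ and every $g\colon\{-1,1\}^n\to\R^+$ pinched between two positive constants, the discrete reverse inequality $\|T_\rho g\|_q\ge\|g\|_p$ under the stated relation between $\rho$, $p$, $q$. Granting this, the $d$-dimensional Ornstein--Uhlenbeck case follows by approximating each Gaussian coordinate by $\tfrac1{\sqrt k}(y_1+\dotsb+y_k)$, so that $T_\rho$ on $\{-1,1\}^{dk}$ tends to $S_\rho$ on $\R^d$, and letting $k\to\infty$; the tensoring property of log-Sobolev keeps the $\rho$-$p$-$q$ relation fixed along the way. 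A general positive $f$ is then recovered by truncating it to $c\vee(f\wedge C)$ and letting $c\downarrow0$, $C\uparrow\infty$, and the boundary case $p=1$ (where $\|f\|_p=\E f$) by letting $p\downarrow1$.

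For the cube, fix $g>0$ and an exponent $p<1$, and define $q=q(\rho)=1+(p-1)\rho^{-2}$; this is the general-$p$ form of the interpolation used in \prettyref{thm:gross} with $\alpha n=1$, so that $q(1)=p$ and $q(\rho)$ decreases (remaining below $p$, hence below $1$) as $\rho$ decreases, reaching the target exponent $q$ at the critical $\rho$. Since $\|T_1 g\|_{q(1)}=\|g\|_p$, it suffices to show that $\rho\mapsto\|T_\rho g\|_{q(\rho)}$ is non-increasing on $(0,1]$. Because $T_\rho$ is positivity-preserving, $T_\rho g>0$, so $G=\E(T_\rho g)^q$ is well-defined and positive for every real $q$; differentiating $G^{1/q}$, expanding $G'$, and invoking \prettyref{lem:noise-energy} exactly as in the proof of \prettyref{thm:gross} gives the same identity
\[
qG' - q'G\log G = q'\Ent\bigl((T_\rho g)^q\bigr) + \frac{nq^2}{2\rho}\,\D\bigl((T_\rho g)^{q-1}, T_\rho g\bigr),
\]
valid for any real $q$. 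Here $q'=2(1-p)\rho^{-3}>0$ and $q-1=-(1-p)\rho^{-2}<0$.

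The single point where $q<1$ changes anything is \prettyref{lem:dirichlet-convexity}. The identical integral comparison $\bigl(\int_b^a t^{q/2-1}\,dt\bigr)^2\le\int_b^a t^{q-2}\,dt\cdot\int_b^a dt$ (Cauchy--Schwarz) still holds, but now $q-1<0$, so multiplying through by $q-1$ reverses the resulting inequality and yields $\D(f,f^{q-1})\le\frac{4(q-1)}{q^2}\,\D\bigl(f^{q/2},f^{q/2}\bigr)$, both sides of which are nonpositive since $u\mapsto u^{q-1}$ is decreasing. Feeding this into the identity above, using $\tfrac{nq^2}{2\rho}>0$ together with the values of $q'$ and $q-1$, the whole right-hand side collapses to $2(1-p)\rho^{-3}\bigl[\Ent\bigl((T_\rho g)^q\bigr)-n\,\D\bigl((T_\rho g)^{q/2},(T_\rho g)^{q/2}\bigr)\bigr]$, whose bracket is nonpositive by the log-Sobolev inequality with $\alpha=1/n$ applied to $h=(T_\rho g)^{q/2}$. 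Hence $qG'-q'G\log G\le0$, the quantity $\|T_\rho g\|_{q(\rho)}$ is non-increasing, and evaluating at the critical $\rho$ gives the claim.

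The differential machinery thus transfers with no new ideas, so I expect the genuine obstacle to be the reduction step: for $p,q<1$ the functionals $\|\cdot\|_p$ are not norms and misbehave near functions approaching zero, so one really must prove the inequality first for $f$ trapped between positive constants---so that all the negative-exponent integrals and the interchanges of limits in the Central Limit Theorem argument are controlled---and only afterwards remove the truncation and take $p\to1$. A slicker alternative, which I would mention but not pursue here, deduces the reverse inequality directly from forward hypercontractivity by a duality argument exploiting the self-adjointness and positivity of $S_\rho$.
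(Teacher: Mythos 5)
The paper never actually proves Borell's theorem: it only remarks that one could start from a two-point base inequality, tensor up to the hypercube, and pass to Gaussian space by the central limit theorem. Your route is genuinely different at the discrete level and is essentially correct: instead of proving and tensoring a two-point \emph{reverse} hypercontractive inequality, you re-run the semigroup argument of \prettyref{thm:gross} along the exponent path $q(\rho)=1+(p-1)\rho^{-2}$, using the cube's log-Sobolev constant $\alpha=1/n$ (already available from the two-point estimate and \prettyref{thm:tensoring}), and you correctly identify that the only sign-sensitive step is \prettyref{lem:dirichlet-convexity}: the Cauchy--Schwarz comparison of integrals is unchanged, and multiplying by $q-1<0$ reverses it, so $\D(f,f^{q-1})\le\tfrac{4(q-1)}{q^2}\D(f^{q/2},f^{q/2})$ for $q<1$. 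The identity $qG'-q'G\log G=q'\Ent((T_\rho g)^q)+\tfrac{nq^2}{2\rho}\D((T_\rho g)^{q-1},T_\rho g)$ is indeed valid for any real $q\ne 0$ once $T_\rho g>0$, and with $q'=2(1-p)\rho^{-3}\ge0$ your bracket is nonpositive by log-Sobolev, giving $\tfrac{d}{d\rho}\|T_\rho g\|_{q(\rho)}\le0$ as needed. What your approach buys is reuse of the paper's existing machinery and, notably, the \emph{sharp} condition $\rho^{-2}\ge(q-1)/(p-1)$, i.e.\ $\rho^2\le(1-p)/(1-q)$; the condition printed in the statement, $\rho^{-4}\ge(p-1)/(q-1)$, is vacuous for $\rho\le1$ and cannot be right as written (take $\rho=1$), whereas your form is exactly what the Gaussian isoperimetry application needs, with equality. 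The paper's sketched route would instead require a bespoke two-point reverse inequality and a tensoring argument for the non-convex $p<1$ functionals.

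Three caveats for a full writeup. When the target $q$ is negative (as in the isoperimetry application) the path $q(\rho)$ crosses $0$, where $G^{1/q}$ and the factor $q^{-2}$ degenerate; argue monotonicity of $q(\rho)^{-1}\log G$ on each side and patch across $q=0$ by continuity (geometric-mean convention), which is harmless since $g$ is pinched between positive constants. For $\rho$ strictly below the critical value you also need monotonicity of $r\mapsto\|h\|_r$ over all real $r$ (Jensen) to pass from $q(\rho)<q$ to $q$, and the boundary limit should be $p\uparrow1$, not $p\downarrow1$ (at $p=1$ the sharp condition forces $\rho=0$, where the claim is trivial). Finally, pushing the inequality itself through the CLT with negative exponents does require your truncation $c\le f\le C$; an alternative avoiding this is to run the same differential argument directly against the Gaussian log-Sobolev inequality proved in the paper, with the Ornstein--Uhlenbeck semigroup in place of $T_\rho$, never returning to the cube.
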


Note that the expressions $\| \cdot \|_p$ are not norms when $p < 1$; in particular, they are not convex. However, this
theorem can be proved by means similar to our proof for the Gaussian log-Sobolev inequality: we start with a base result
for the 2-point space, proceed by tensoring to the hypercube, and use the central limit theorem to cover Gaussian
space.

As an application of Borell's result, consider the following strong isoperimetry theorem for Gaussian space
(due to Sherman).

\begin{thm}[Gaussian isoperimetry, \cite{Sherman}]
  Let $u, u' \in \R^d$ be independent Gaussian random variables. Then for any set $A \subseteq \R^d$ and any $\tau > 0$,
  we have
\[
\Pr_u \left[ \Pr_{u'}[\rho u + (\sqrt{1-\rho^2}) u' \in A] \leq \tau \right] \leq \frac{\tau^{1-\rho}}{\mu(A)}
\]
\end{thm}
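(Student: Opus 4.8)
The plan is to apply Borell's reverse hypercontractivity theorem to the indicator $f = 1_{A}$ and then extract the tail bound by an elementary Markov-type estimate. Throughout, let $\mu$ be the standard Gaussian measure on $\R^{d}$. The inner probability in the statement is precisely $(S_{\rho}1_{A})(u) = \E_{u'}1_{A}(\rho u + \sqrt{1-\rho^{2}}\,u')$, a function that is strictly positive everywhere (assuming $\mu(A) > 0$, which we may, since otherwise the right-hand side is infinite); we may also assume $0 < \rho < 1$, the cases $\rho = 0$ and $\rho = 1$ giving trivial estimates. Writing $B = \{u : (S_{\rho}1_{A})(u) \leq \tau\}$, the goal is to show that $\mu(B) \leq \tau^{1-\rho}/\mu(A)$.

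First I would invoke reverse hypercontractivity for the operator $S_{\rho}$ with source exponent $p = 1-\rho$ and target exponent $q = \rho - 1$. These satisfy $q < p \leq 1$, and the hypothesis $\rho^{-4} \geq (p-1)/(q-1)$ reduces to $\rho^{-4} \geq \rho/(2-\rho)$, that is, $\rho^{5} + \rho \leq 2$, which holds for every $\rho \in [0,1]$. Since $\|1_{A}\|_{p} = \mu(A)^{1/(1-\rho)}$, the theorem gives $\|S_{\rho}1_{A}\|_{\rho-1} \geq \mu(A)^{1/(1-\rho)}$. (To respect the strict-positivity convention in the statement of Borell's theorem, one applies it to $1_{A} + \epsilon$ and lets $\epsilon \to 0$, using monotone convergence.) Because $q = \rho - 1 < 0$, raising both sides to the power $q$ reverses the inequality:
\[
\E_{u}\bigl[(S_{\rho}1_{A})(u)^{\rho-1}\bigr] = \|S_{\rho}1_{A}\|_{\rho-1}^{\rho-1} \leq \mu(A)^{(\rho-1)/(1-\rho)} = \frac{1}{\mu(A)}.
\]

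Second, I would bound the same expectation from below using the definition of $B$: since $t \mapsto t^{\rho-1}$ is positive and decreasing on $(0,\infty)$, on $B$ we have $(S_{\rho}1_{A})^{\rho-1} \geq \tau^{\rho-1}$, so discarding the (nonnegative) contribution of $B^{c}$ gives $\E_{u}[(S_{\rho}1_{A})^{\rho-1}] \geq \tau^{\rho-1}\mu(B)$. Combining the two estimates yields $\tau^{\rho-1}\mu(B) \leq 1/\mu(A)$, which rearranges to the claim. The only genuinely technical point — and hence the main obstacle — is verifying that $\E_{u}[(S_{\rho}1_{A})^{\rho-1}]$ is finite, so that these manipulations are legitimate: a crude lower bound shows that $(S_{\rho}1_{A})(u)$ decays no faster than $\exp\bigl(-\Theta(\rho^{2}\|u\|^{2}/(1-\rho^{2}))\bigr)$ as $\|u\| \to \infty$, so the integrand grows like $\exp\bigl(\Theta(\rho^{2}\|u\|^{2}/(1+\rho))\bigr)$, which is Gaussian-integrable because $\rho^{2}/(1+\rho) < 1$ on $[0,1]$; alternatively, one first approximates $A$ from inside by bounded sets and passes to the limit at the end.
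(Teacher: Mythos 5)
Your proof is correct and follows essentially the same route as the paper: apply Borell's reverse hypercontractivity to $f = 1_A$ with $p = 1-\rho$ and a negative exponent $q$, then bound $\Pr_u[S_\rho 1_A(u) \leq \tau]$ by a Markov-type estimate on the negative moment $\E_u[(S_\rho 1_A)^q]$. The only differences are minor: you choose $q = \rho - 1$ (the paper takes $q = 1 - 1/\rho$, which yields the slightly stronger intermediate bound $(\tau^{1-\rho}/\mu(A))^{1/\rho}$ before weakening it to the stated one), and you treat the positivity and integrability issues more carefully than the paper does.
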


\begin{proof}
  When $\mu(A) \leq \tau^{1-\delta}$, there is nothing to prove. Otherwise, let $f$ be the indicator function of $A$ and
  observe that $\Pr_{u'} \bigl[\rho u + (1-\rho^2)^{1/2} u' \in A \bigr] = S_\rho f(u)$. Therefore, for $q = 1-1/\rho < 0$, we
  have
\begin{align*}
\Pr_u\left[\Pr_{u'} [u' \in A] \leq \tau\right]
&= \Pr_u [S_\rho f(u) \leq \tau] \\
&= \Pr_u [S_\rho f(u)^q \geq \tau^q] \\
&\leq \frac{\E_u (S_\rho f(u))^q}{\tau^q}
\end{align*}
by an application of Markov's inequality. But $\E_u (S_\rho f(u))^q$ is just $\| S_\rho f \|_q^q$, and we know by
Borell's theorem that $\| S_\rho f \|_q \geq \| f \|_p$ for $p = 1-\rho$. Thus
\[
\Pr_u\left[\Pr_{u'} [u' \in A] \leq \tau\right]
\leq \frac{\| f \|_p^q}{\tau^q}
= \frac{\mu(A)^{q/p}}{\tau^q}
= \left(\frac{\tau^{1-\rho}}{\mu(A)}\right)^{1/\rho}
\leq \frac{\tau^{1-\rho}}{\mu(A)}
\]
where we have used the facts that $q < 0$ and $\rho \leq 1$.
\end{proof}

\subsection{Fast graph partitioning and the constructive Big Core Theorem}

\paragraph*{Problem and SDP rounding algorithm.} In the $c$-balanced separator problem, we are given a graph $G = (V,
E)$ on $n$ vertices and asked to find the smallest set of edges such that their removal disconnects the graph into
pieces of size at most $cn$. The problem is NP-hard, and the best known approximation ratio\footnote{For technical
  reasons, it is actually a pseudo-approximation: the algorithm's output for $c$ is compared to the optimal value for
  $c' \neq c$.} is $\Theta(\sqrt{\log n})$.

The first algorithm to achieve this bound was based on a semidefinite program that assigns a unit vector to each vertex
and minimizes the total embedded squared length of the edges subject to the constraint that the vertices are spread out
and that the squared distances between the points form a metric:
\begin{align*}
  \text{minimize} & \textstyle\sum_{i \sim j} \| x_i - x_j \|_2^2 \\
  \text{subject to}
  & \| x_i \|_2^2 = 1 & \forall i \in V \\
  & \textstyle\sum_{i,j} \| x_i - x_j \|^2 \geq c(1-c)n \\
  & \| x_i - x_j \|_2^2 + \| x_j - x_k \|_2^2 \geq \| x_i - x_k \|^2 & \forall i, j, k \in V
\end{align*}
To round this SDP, Arora, Rao and Vazirani \cite{ARV} pick a random direction $u$ and project all the points
along $u$. They then define sets $A$ and $B$ consisting of points $x$ whose projections are sufficiently large, i.e., $A
= \{ x \mid \langle x, u \rangle < -K \}$ and similarly $B = \{ x \mid \langle x, u \rangle > K \}$, where $K$ is chosen
to make $A$ and $B$ have size $\Theta(n)$ with high probability. Next, they discard points $a \in A, b \in B$ such that
$\| a - b \|$ is much smaller than expected for a pair whose projections are $\geq 2K$ apart. Finally, if the resulting
pruned sets $A' \subset A$ and $B' \subset B$ are large enough, they show that greedily growing $A$ yields a good cut.

\paragraph*{Matchings and cores.} The key step in making this argument work is to ensure that not too many pairs $(a,b)$
are removed in the pruning step. To bound the probability of this bad event, we consider the possibility that for a
large fraction $\delta = \Omega(1)$ of directions $u$, there exists a matching of points $M_u$ such that each pair $(a,
b) \in M_u$ is short (i.e., $\| a - b \| \leq \ell = O(1/\sqrt{\log n})$) but stretched along $u$ (i.e., $| \langle a -
b, u \rangle | \geq \sigma = \Omega(1)$). Such a set of points is called a \emph{$(\sigma, \delta, \ell)$-core}. The big
core theorem (first proved with optimal parameters by Lee \cite{Lee}) asserts that this situation can't arise: for a
fixed $\sigma, \delta$, and $\ell$, we must have $n \gg \exp(\sigma^6/\ell^4\log^2(1/\delta))$, which is a contradiction
for our chosen values of $\sigma, \delta, \ell$.

In order to prove the big core theorem, Lee concatenates pairs that share a point and belong in matchings for nearby
directions. The existence of a long chain of such concatenations is what leads to a contradiction: if we consider the
endpoints $a, b$ of a chain of length $p$, the projection $| \langle a - b, u \rangle |$ grows linearly in $p$ whereas
the distance $\| a - b \|$ grows only as $\sqrt p$ (recall that the SDP constrained the \emph{squared} distances to form
a metric).

\paragraph*{Boosting.} The matching chaining argument we have just presented in its simple form doesn't
work, for the following reason. At each chaining step, the fraction of nearby directions available for our use reduces
by roughly $1 - \delta$ (by a union bound) so that we are rapidly left with no direction to move in. To remedy this
situation, we need to boost the fraction of usable directions at each step, say from $\delta/2$ to $1 - \delta/2$, so
that we can carry on chaining in spite of a $1 - \delta$ loss. Lee's proof uses the standard isoperimetric inequality
for the sphere to show that this boosting can be performed with no change in $\ell$ and a very small penalty in
$\sigma$. In other words, we take advantage of the fact that a very small dilation of a set of constant measure (i.e.,
the set of available directions) has measure close to $1$.

\paragraph*{Faster algorithms.} Lee's big core theorem is non-constructive in the sense that it only shows the
\emph{existence} of such a long chain of matched pairs in order to give a contradiction. While this form suffices to
bound the approximation ratio of the ARV rounding scheme, other variants of their technique require a way to
\emph{efficiently sample} long chains, not just show their existence. Sherman constructs a distribution over directions
that does not depend on the point set at all, yet is guaranteed to always have a non-trivial probability of producing
long chains of stretched pairs. More precisely,

\begin{thm}[Constructive big core \cite{Sherman}]
  For any $1 \leq R \leq \Theta(\sqrt{\log n})$, there is $P \geq \Theta(R^2/\log n)$ and an efficiently sampleable
  distribution $\mu$ over the set of sequences of $\leq P$ direction vectors (each in $\R^d$), such that: for any
  $(\sigma, \delta, \ell)$-core $M$, if the string of directions is sampled from $\mu$, the expected number of chains
  whose endpoints are $\geq P\ell$ apart is at least $\exp(-O(P^2)n)$.
\end{thm}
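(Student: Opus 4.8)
The plan is to take $\mu$ to be the law of a discretized Ornstein--Uhlenbeck walk on direction vectors that makes no reference to the point set at all. Fix a correlation $\rho = 1 - c/P$ for a small absolute constant $c > 0$, sample $u_1 \sim \mathcal{N}(0,1)^d$, and set $u_{t+1} = \rho\, u_t + (1-\rho^2)^{1/2} z_t$ with $z_1, \dots, z_{P-1}$ fresh independent standard Gaussian vectors. This is clearly efficiently sampleable, its stationary law is the standard Gaussian measure on $\R^d$ --- which is (up to scaling) the distribution ARV draw a random direction from --- and each step is precisely the smoothing appearing in the Gaussian isoperimetry theorem. The choice $1-\rho = \Theta(1/P)$ is what couples the two halves of the argument: on one hand it keeps the walk coherent, since $\E\|u_t - u_1\|^2 = O(P(1-\rho)d) = O(d)$, so with overwhelming probability every $u_t$ subtends an angle $O(\sqrt c)$ with $u_1$; on the other hand it keeps $\tau^{1-\rho}$ usefully bounded away from $1$ for a threshold $\tau$ that is only exponentially small in $P$, which is what the isoperimetric step will need.

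Next I would reduce the theorem to a bound on the walk staying inside a fixed ``good'' set. Given a $(\sigma, \delta, \ell)$-core on $N = \Omega(n)$ points, let $G \subseteq \R^d$ be the set of directions admitting a short-but-stretched matching; by hypothesis its stationary measure is at least $\delta$, and we may take each $M_u$ (for $u \in G$) to be a perfect matching of the $N$ points, oriented consistently so that traversing a pair in its designated direction raises the projection along $u$ by at least $\sigma$. Fix any start point $a_0$; conditioned on a walk $u_1, \dots, u_P$, chain greedily --- at step $t$, if $u_t \in G$, move from the current point along its $M_{u_t}$-pair. If the walk stays in $G$ throughout, this produces a chain $a_0, \dots, a_P$ with $\|a_{t-1} - a_t\| \le \ell$ and, because $u_t$ is nearly parallel to $u_1$, with $\langle a_{t-1} - a_t, u_1 \rangle$ of a single sign and of magnitude at least $\sigma - O(\ell \sqrt c \,\|u_1\|)$; summing, the endpoints satisfy $|\langle a_0 - a_P, u_1 \rangle| = \Omega(P\sigma)$, hence $\|a_0 - a_P\| = \Omega(P\sigma / \sqrt d)$. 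Together with the SDP's squared triangle inequality $\|a_0 - a_P\|^2 \le \sum_t \|a_{t-1} - a_t\|^2 \le P\ell^2$ and its unit-norm constraint, this is a genuine separation once $P$ exceeds a suitable constant --- in particular the endpoints are much farther apart than the $P\ell$ bound for a generic length-$P$ chain of short steps. Since every one of the $N$ start points yields such a chain whenever the walk stays in $G$, linearity of expectation reduces everything to showing $\Pr_\mu[u_1, \dots, u_P \in G] \ge \exp(-O(P^2))$.

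This last bound --- the step that replaces Lee's non-constructive sphere-isoperimetry boosting by an explicit, point-set-independent process --- comes from iterating the Gaussian isoperimetry theorem. Peeling the steps off from the end, set $B_0 = G$ and $B_{k+1} = G \cap \{\, u : \Pr_z[\rho u + (1-\rho^2)^{1/2} z \in B_k] \ge \tau \,\}$. The isoperimetry theorem applied with the set $B_k$ gives $\Pr_u[\Pr_z[\rho u + (1-\rho^2)^{1/2} z \in B_k] \le \tau] \le \tau^{1-\rho}/\mu(B_k)$, hence $\mu(B_{k+1}) \ge \delta - \tau^{1-\rho}/\mu(B_k)$; choosing $\tau$ so that $\tau^{1-\rho} \le \delta^2/4$, an easy induction gives $\mu(B_k) \ge \delta/2$ for every $k$, so the recursion never collapses. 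By construction, if $u_1 \in B_{P-1}$ then $u_1 \in G$ and, with probability at least $\tau$, the next direction lies in $B_{P-2} \subseteq G$, and so on down to $B_0 = G$; therefore $\Pr_\mu[u_1, \dots, u_P \in G] \ge \mu(B_{P-1})\, \tau^{P-1} \ge \tfrac{\delta}{2}\, \tau^{P-1}$. Since $1-\rho = \Theta(1/P)$ we may take $\tau = (\delta^2/4)^{\Theta(P)} = \exp(-\Theta(P)\log(1/\delta))$, so that $\tau^{P-1} = \exp(-\Theta(P^2)\log(1/\delta)) = \exp(-O(P^2))$ once $\delta = \Omega(1)$. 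Combined with the previous paragraph, the expected number of length-$P$ chains with far-apart endpoints is at least $N \exp(-O(P^2)) = \exp(-O(P^2))\, n$, as claimed.

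The main difficulty I anticipate is the quantitative balance hiding in the choice of $\rho$: the two uses of $1-\rho = \Theta(1/P)$ must be made genuinely compatible. The coherence estimate needs the constant $c$ small enough that the error terms $O(\ell \sqrt c\, \|u_1\|)$ and $O(\sqrt c\, \sigma)$ do not swallow the $\Omega(P\sigma)$ growth of the chain's span, whereas the isoperimetric bound degrades as $1-\rho \to 0$, forcing the usable threshold down to $\tau = \exp(-\Theta(P)\log(1/\delta))$ and thus the final probability down to $\exp(-\Theta(P^2))$. Tracking how these losses compound over all $P$ steps without the sets $B_k$ dropping below measure $\delta/2$, and then checking that the resulting $\Omega(P\sigma)$-stretched length-$P$ chains are far enough apart to contradict the SDP constraints across the full claimed range of $R$, is where the real work lies; the rest is bookkeeping around the two cited inequalities.
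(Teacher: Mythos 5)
The paper only sketches Sherman's argument, but the sketch's essential structural feature is exactly the one your proposal drops: Sherman interleaves \emph{two} kinds of directions --- a $\rho$-correlated chain $u_1,\dotsc,u_P$ (the chaining moves) and \emph{independent} fresh Gaussians $w_1,\dotsc,w_P$ (the region-growing/boosting moves), randomly shuffled and truncated at a random length --- precisely because, as the paper's ``Boosting'' paragraph explains, the set of directions that can extend a given chain shrinks by a factor of roughly $1-\delta$ at every step, so a pure correlated walk runs out of usable directions. Your single Ornstein--Uhlenbeck walk has no substitute for this boosting, and the place where that omission bites is your reduction ``walk stays in $G$ $\Rightarrow$ every start point yields a long stretched chain.'' That implication is false as stated: the definition of a $(\sigma,\delta,\ell)$-core does not give you a \emph{perfect} matching for each good direction (only a matching of pairs that are short and stretched), so the current chain endpoint need not be matched under $M_{u_t}$ at all; and even when it is matched, you cannot ``orient consistently'' from the chain's point of view --- the pair through the current endpoint may be stretched in the direction that \emph{decreases} the projection along $u_1$, so your greedy rule can simply oscillate between two points and the claimed $\Omega(P\sigma)$ growth of $\langle a_0-a_P,u_1\rangle$ does not follow. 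In other words, ``$u_t\in G$ for all $t$'' is not the event you need; the event you need is that each successive direction extends the particular partial chain built so far, and controlling that event is exactly what Lee's sphere-isoperimetry boosting (and, constructively, Sherman's interleaved independent $w_i$'s, which let the collection of live chains be re-grown obliviously) is for.

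The part of your argument that is sound in isolation is the iterated use of the stated Gaussian isoperimetry theorem: with $1-\rho=\Theta(1/P)$, the peeling sets $B_{k+1}=G\cap\{u:\Pr_z[\rho u+(1-\rho^2)^{1/2}z\in B_k]\ge\tau\}$ with $\tau^{1-\rho}\le\delta^2/4$ do stay of measure $\ge\delta/2$, and the Markov property does give $\Pr[u_1,\dotsc,u_P\in G]\ge(\delta/2)\tau^{P-1}=\exp(-O(P^2))$ for constant $\delta$. That is a nice observation, and it is in the spirit of how reverse hypercontractivity enters Sherman's analysis, but it bounds the sojourn probability of the walk in a fixed set rather than the probability that the chaining process survives; to repair the proof you would have to either re-introduce the independent boosting directions (as Sherman does) or replace the per-start-point greedy chain by a counting argument over all partial chains in which the isoperimetric step boosts the measure of directions extending \emph{some} surviving chain. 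As written, the central combinatorial difficulty of the theorem has been assumed away rather than solved.
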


We sketch some of the ideas of the proof here. Sherman constructs two sequences of Gaussian directions $u_1, \dotsc,
u_P$ and $w_1, \dotsc, w_P$. Each $w_i$ is an independent Gaussian vector, whereas each $u_i$ for $i > 1$ is a Gaussian
vector $\rho$-correlated with $u_{i-1}$. Finally, the distribution $\mu$ is given by randomly shuffling together the
$u_i$ and $w_i$, picking a uniformly random $R$ between $1$ and $P$, and returning the first $R$ elements of the
shuffled sequence. The correlated directions $u_i$ correspond to the steps in which Lee's proof chained pairs from
similar directions, whereas the independent $w_i$ correspond to the region-growing steps necessary for boosting. By
randomly interleaving these two types of moves, Sherman's sampling algorithm can be oblivious to the actual point set it
is acting on.

\section{Complexity theoretic applications}

\subsection{Dictatorship testing with perfect completeness}

\paragraph*{Definitions.} A function $f \colon \{-1,1\}^n \to \R$ is said to be $(\epsilon, \delta)$-quasirandom if
$\hat f(S) \leq \epsilon$ whenever $|S| \leq 1/\delta$.  In order to show that a given problem is hard to approximate,
we often need to design a test that
\begin{compactitem}
\item performs $q$ \emph{queries} on a black-box function $f$,
\item accepts every dictator function with probability $\geq c$ (the \emph{completeness} probability), and
\item accepts every $(\epsilon, \delta)$-quasirandom function with probability $\leq s$ (the \emph{soundness}
  probability).
\end{compactitem}
A test is said to be \emph{adaptive} if each query is allowed to depend on the result of the queries so far.

While dictatorship tests for the $c < 1$ setting have been known for over a decade (first from the work of H\aa stad and
more recently via the Unique Games Conjecture of Khot), there were no nontrivial bounds for $c = 1$ until some recent
results of O'Donnell and Wu. Their analysis, which we show below, relies heavily on the hypercontractive inequality.

\begin{thm}[\cite{OWu}] \label{thm:perfcomp}
  For every $n > 0$, there is a $3$-query non-adaptive test that accepts every dictator function $(x_1, \dotsc, x_n)
  \mapsto x_i$ with probability $c = 1$ but accepts any $(\delta, \delta/\log(1/\delta))$-quasirandom odd function $f
  \colon \{-1, 1\}^n \to [-1,1]$ with probability $\leq s = 5/8 + O(\sqrt\delta)$.
\end{thm}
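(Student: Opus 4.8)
The plan is to construct an explicit $3$-query non-adaptive test built around the ``Not-Two'' predicate $\NTW\colon\{-1,1\}^{3}\to\{0,1\}$, which accepts its input unless exactly two of the three bits equal $-1$ (so it accepts $5$ of the $8$ inputs), and then to analyze it by arithmetizing $\NTW$ and bounding the resulting low-order Fourier correlations with the hypercontractive inequality of \prettyref{thm:gross}.

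\emph{The test and perfect completeness.} One checks that $\NTW(a,b,c)=1$ precisely when $abc=-1$ or $(a,b,c)=(1,1,1)$, so its accepting set is $R=\{(1,1,1),(-1,1,1),(1,-1,1),(1,1,-1),(-1,-1,-1)\}$. The test fixes a probability distribution $\pi$ on $R$, samples $(x_{i},y_{i},z_{i})\sim\pi$ independently for every coordinate $i\in[n]$, queries $f(x),f(y),f(z)$, and accepts with probability equal to the multilinear extension of $\NTW$ evaluated at these (real, $[-1,1]$-valued) answers. Since every coordinate triple lies in $R$, a dictator $f(w)=w_{i}$ is accepted with probability exactly $1$ \emph{no matter what $\pi$ is}: this is the perfect completeness. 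The freedom in $\pi$ is used only in the soundness bound. The minimal-entropy choice --- forcing the triples onto the ``$abc=-1$'' part of $R$, i.e.\ $z=-xy$ coordinatewise --- fails, because then every parity $\chi_{U}$ with $|U|$ odd is also accepted with probability $1$ (the triple $(\chi_{U}(x),\chi_{U}(y),-\chi_{U}(x)\chi_{U}(y))$ always has product $-1$), and such a $\chi_{U}$ is odd and quasirandom. So I would take $\pi$ to carry genuine noise, chosen so that (i) the marginals of $x,y,z$ are all uniform on $\{-1,1\}^{n}$, and (ii) each pairwise coordinate correlation $\rho_{2}=\E_{\pi}[x_{1}z_{1}]$ (and its two siblings) and the triple correlation $\rho_{3}=\E_{\pi}[x_{1}y_{1}z_{1}]$ have absolute value $1-\Omega(1)$, with the $\Omega(1)$ of order $\delta$ up to logarithmic factors.

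\emph{Arithmetization.} Since $\NTW$ is symmetric and accepts five of eight inputs, its multilinear extension is
\[
P(a,b,c)=\tfrac{5}{8}+\tfrac{1}{8}(a+b+c)+\tfrac{1}{8}(ab+bc+ca)-\tfrac{3}{8}\,abc,
\]
so by linearity of expectation the test's acceptance probability on $f$ is
\[
\tfrac{5}{8}+\tfrac{1}{8}\bigl(\E f(x)+\E f(y)+\E f(z)\bigr)+\tfrac{1}{8}\bigl(\E f(x)f(y)+\E f(x)f(z)+\E f(y)f(z)\bigr)-\tfrac{3}{8}\,\E f(x)f(y)f(z).
\]
Expand in the Fourier basis. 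Because the coordinate triples are i.i.d.\ under $\pi$ with uniform marginals, each linear term equals $\hat f(\emptyset)$, which is $0$ since $f$ is odd; and a coordinate lying in only one of two index sets forces a zero factor, so $\E f(x)f(z)=\sum_{S}\hat f(S)^{2}\rho_{2}^{|S|}$ is purely diagonal (and likewise the other two quadratic terms, and the diagonal $\sum_{S}\hat f(S)^{3}\rho_{3}^{|S|}$ of the cubic term). Splitting each such sum at degree $d_{0}=\Theta\bigl(\log(1/\delta)/\delta\bigr)$ --- the quasirandomness threshold --- the contribution of sets of size $\le d_{0}$ is $\le\delta\sum_{|S|\le d_{0}}\hat f(S)^{2}\le\delta$ (each coefficient is $\le\delta$), and the contribution of larger sets is at most $\max\{|\rho_{2}|,|\rho_{3}|\}^{d_{0}}$, which our choice of $\pi$ makes $O(\sqrt\delta)$; so all of these terms are $O(\sqrt\delta)$.

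\emph{The cubic term --- the main obstacle.} What remains is the off-diagonal part of $\E f(x)f(y)f(z)=\sum_{S,T,U}\hat f(S)\hat f(T)\hat f(U)\,c_{\pi}(S,T,U)$, where $c_{\pi}(S,T,U)$ is a product of $\rho_{2}$'s over coordinates lying in exactly two of $S,T,U$ and $\rho_{3}$'s over coordinates in all three, and vanishes unless every coordinate of $S\cup T\cup U$ lies in at least two of the sets. The hard point is that a crude estimate fails here: writing the sum as $\langle f(x)f(y),\E[f(z)\mid x,y]\rangle$ and applying Cauchy--Schwarz gives only $\|f\|_{2}^{2}\,\|\E[f(z)\mid x,y]\|_{2}\le1$, since for a $[-1,1]$-valued $f$ the low-degree Fourier mass need not be small. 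The plan is to split all three appearances of $f$ at degree $d_{0}$: any term with a high-degree factor is annihilated by the correlation decay built into $c_{\pi}$ (a power of $\rho_{2}$ or $\rho_{3}$ that grows with that factor's degree), while the term with all three factors of degree $\le d_{0}$ must, by quasirandomness, carry a factor of $\delta$ from one extracted character --- and to convert ``one small coefficient, two bounded factors'' into a genuine bound one uses the hypercontractive (Bonami--Gross--Beckner) inequality in its multilinear form, bounding $c_{\pi}$-weighted trilinear sums by products of $L_{q}$-norms of $f$ with $q$ governed by the $\rho$'s, and then using $\|f\|_{\infty}\le1$; the Cauchy--Schwarz step that pairs the two heavy factors against the single light one is what turns the $\delta$ into $\sqrt\delta$. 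Combining the resulting $O(\sqrt\delta)$ bound on the cubic term with the $O(\sqrt\delta)$ bounds above shows that any $(\delta,\delta/\log(1/\delta))$-quasirandom odd $f\colon\{-1,1\}^{n}\to[-1,1]$ is accepted with probability $\tfrac{5}{8}+O(\sqrt\delta)=s$, while dictators are accepted with probability $c=1$. I expect the genuinely delicate steps to be the construction of $\pi$ (simultaneously supported on $R$, with uniform marginals, and with all correlations bounded away from $\pm1$) and making the multilinear hypercontractive estimate on the off-diagonal cubic form precise.
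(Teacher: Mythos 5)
Your test, completeness argument, and arithmetization match the paper's (same $\NTW$ predicate, i.i.d.\ per-coordinate distribution supported on the accepting set, linear terms killed by oddness, diagonal quadratic/cubic terms handled by correlation decay plus quasirandomness; the paper's explicit choice of $\pi$ --- with probability $1-\delta$ take $z_i=-x_iy_i$ with $x_i,y_i$ uniform, with probability $\delta$ take $x_i=y_i=z_i$ uniform, giving $\rho_2=\delta$ and $\rho_3=-(1-\delta)$ --- satisfies your constraints, so constructing $\pi$ is not actually delicate). But there is a genuine gap at the step you yourself flag as the crux, the off-diagonal cubic form, and your sketch of it is not just incomplete but structurally wrong. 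Splitting each occurrence of $f$ at degree $d_0$ and dismissing every piece containing a high-degree factor ``by correlation decay'' is a term-by-term triangle-inequality argument: the coefficient $c_\pi(S,T,U)$ is indeed pointwise small there, but the number of contributing triples is enormous (given $S,T$, every $U=A\cup(S\setminus T)\cup(T\setminus S)$ with $A\subseteq S\cap T$ contributes), and $\sum|\hat f(S)\hat f(T)\hat f(U)|$ over valid triples is not bounded by anything useful, so pointwise smallness of the weights does not make those pieces negligible. The decay has to be exploited inside a norm, not after taking absolute values. Relatedly, your quadratic bound ``$\leq\delta\sum\hat f(S)^2$'' needs $|\rho_2|$ itself to be $O(\delta)$ (as in the paper), not merely $1-\Omega(\delta)$ as your condition (ii) allows: quasirandomness controls individual coefficients, not the low-degree Fourier weight.

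What actually closes the gap in the paper is a specific decomposition plus a sharpened hypercontractive inequality, neither of which appears in your plan. One groups the valid triples by their common intersection $A$ (writing $S=A\cup B\cup C$, $T=A\cup C\cup D$, $U=A\cup D\cup B$), defines $g_A$ with $\hat g_A(X)=\hat f(A\cup X)$, and shows the whole cubic sum equals $-\sum_{|A|\text{ odd}}(1-\delta)^{|A|}\|T_{\sqrt\delta}\,g_A\|_3^3$; the $\delta^{|B|+|C|+|D|}$ decay becomes the noise operator $T_{\sqrt\delta}$ inside an $L_3$ norm. Quasirandomness is then used only on the pure-character part $\sum(1-\delta)^{|A|}|\hat f(A)|^3$ (your ``extract one small coefficient'' idea lives there, not on general low-degree triples), while the remaining part is handled by the interpolated inequality $\|T_\rho h\|_q\leq\|T_\rho h\|_2^{1-\lambda}\|h\|_2^{\lambda}$ (valid when $\rho^\lambda\leq 1/\sqrt{q-1}$), applied with $q=3$, $\rho=\sqrt\delta$, $\lambda=\log 2/\log(1/\delta)$. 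This refinement is essential: plain hypercontractivity $\|T_{\sqrt\delta}\tilde g_A\|_3\leq\|\tilde g_A\|_2$, or any generic ``multilinear hypercontractive bound by $L_q$-norms of $f$'' combined with $\|f\|_\infty\leq 1$, only yields $O(1)$ because $\sum_A(1-\delta)^{|A|}\|\tilde g_A\|_2^3$ need not be small; the interpolation is what retains the factor $\|T_{\sqrt\delta}\tilde g_A\|_2^2=\sum_{B\neq\emptyset}\delta^{|B|}\hat f(A\cup B)^2\leq\delta$, after which the sum over $A$ collapses to $O(\sqrt\delta)$ since $\sum_{A\cup B=V}(1-\delta)^{|A|}\delta^{|B|}\leq 1$. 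Without this decomposition and the $\lambda$-interpolated lemma (or an equivalent mechanism), your outline does not yield the $5/8+O(\sqrt\delta)$ soundness bound.
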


The proof uses the following strengthening of the hypercontractive inequality for restricted parameter values.
\begin{lem}
  If $0 \leq \rho \leq 1$, $q \geq 1$, and $0 \leq \lambda \leq 1$ satisfy $\rho^\lambda \leq 1/\sqrt{q - 1}$, then for
  all $f \colon \{-1,1\}^n \to \R$, $\| T_\rho f \|_q \leq \| T_\rho f \|_2^{1-\lambda} \| f \|_2^\lambda$.
\end{lem}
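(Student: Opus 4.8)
The plan is to exploit the semigroup structure of the noise operator to split off the ``hypercontractive part'' of $T_\rho$, and then to interpolate the leftover $L_2$ factor by hand using H\"older's inequality. Concretely, I would write $T_\rho = T_{\rho^\lambda}\circ T_{\rho^{1-\lambda}}$, which is legitimate because $\BG_\sigma*\BG_\tau=\BG_{\sigma\tau}$, $\rho^\lambda\cdot\rho^{1-\lambda}=\rho$, and both $\rho^\lambda$ and $\rho^{1-\lambda}$ lie in $[0,1]$. Setting $g=T_{\rho^{1-\lambda}}f$, we then have $T_\rho f=T_{\rho^\lambda}g$.

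Next I would apply the $p=2$ case of \prettyref{thm:gross}. Since the log-Sobolev constant of the cube is $\alpha=1/n$, the criterion there reads $\|T_\sigma\|_{2\to q}\le 1\iff\sigma^{-2}\ge q-1\iff\sigma\le 1/\sqrt{q-1}$. The hypothesis $\rho^\lambda\le 1/\sqrt{q-1}$ is precisely this condition with $\sigma=\rho^\lambda$, so $\|T_\rho f\|_q=\|T_{\rho^\lambda}g\|_q\le\|g\|_2=\|T_{\rho^{1-\lambda}}f\|_2$. (When $q\le 2$ nothing hypercontractive is happening: the hypothesis holds automatically and the step is just the monotonicity of norms together with $\|T_\sigma\|_{2\to 2}\le 1$.)

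The only remaining task --- and the one step that requires an idea rather than bookkeeping --- is to show $\|T_{\rho^{1-\lambda}}f\|_2\le\|T_\rho f\|_2^{1-\lambda}\|f\|_2^\lambda$. I would do this in the Fourier domain:
\[
\|T_{\rho^{1-\lambda}}f\|_2^2=\sum_S\rho^{2(1-\lambda)|S|}\hat f(S)^2=\sum_S\bigl(\rho^{2|S|}\hat f(S)^2\bigr)^{1-\lambda}\bigl(\hat f(S)^2\bigr)^{\lambda},
\]
and H\"older's inequality with conjugate exponents $\tfrac{1}{1-\lambda}$ and $\tfrac1\lambda$ then bounds the right-hand side by $\bigl(\sum_S\rho^{2|S|}\hat f(S)^2\bigr)^{1-\lambda}\bigl(\sum_S\hat f(S)^2\bigr)^{\lambda}=\|T_\rho f\|_2^{2(1-\lambda)}\|f\|_2^{2\lambda}$, using $\|T_\rho f\|_2^2=\sum_S\rho^{2|S|}\hat f(S)^2$ and $\|f\|_2^2=\sum_S\hat f(S)^2$. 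Taking square roots and chaining with the previous step gives the lemma.

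Finally I would dispose of the degenerate values --- $\lambda=0$ (the claim reduces to $\|T_\rho f\|_q\le\|T_\rho f\|_2$, i.e. $q\le 2$), $\lambda=1$ (the claim is ordinary $2\to q$ hypercontractivity), and $\rho=0$ (where $T_0f=\E f$ is constant and all norms coincide) --- all of which also fall out of the argument above under the convention $0^0=1$. I do not anticipate a genuine obstacle: the entire content is in recognizing that peeling off $T_{\rho^\lambda}$ via hypercontractivity leaves exactly the quantity $\|T_{\rho^{1-\lambda}}f\|_2$, which H\"older interpolates between $\|T_\rho f\|_2$ and $\|f\|_2$; the hypercontractive inequality itself is invoked only as a black box.
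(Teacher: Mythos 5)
Your proof is correct and takes essentially the same route as the paper: factor $T_\rho = T_{\rho^\lambda}\circ T_{\rho^{1-\lambda}}$, peel off $T_{\rho^\lambda}$ by $2\to q$ hypercontractivity using $\rho^\lambda\le 1/\sqrt{q-1}$, and then control $\|T_{\rho^{1-\lambda}}f\|_2$ in the Fourier domain. If anything you are more careful than the paper, whose final display writes the interpolation step as an equality; your explicit H\"older step with exponents $\tfrac{1}{1-\lambda}$ and $\tfrac{1}{\lambda}$ is the correct justification of that bound.
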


\begin{proof}
  \begin{align*}
    \| T_\rho f \|_q^2
    &= \| T_{\rho^\lambda} T_{\rho^{1-\lambda}} f \|_q^2 \\
    &\leq \| T_{\rho^{1-\lambda}} f \|_2^2 \\
    &= \sum_S | \rho \hat f(S) |^{2(1-\lambda)} | \hat f(S)
    |^{2\lambda} \\
    &= \| T_\rho f \|_2^{2(1-\lambda)} \| f \|_2^{2\lambda} \qquad \qedhere
  \end{align*}
\end{proof}

\begin{proof}[Proof of \prettyref{thm:perfcomp}]
  Define the ``not-two'' predicate $\NTW \colon \{-1,1\}^3 \to \{-1, 1\}$ as follows: $\NTW(a, b, c) = 1$ if exactly two
  of $a, b, c$ equal $-1$, and $\NTW(a, b, c) = -1$ otherwise. Explicitly,
  \[
  \begin{array}{rrrrrrrrr}
    a             & -1 & -1 & -1 & -1 & 1 & 1 & 1 & 1 \\
    b             & -1 & -1 & 1 & 1 & -1 & -1 & 1 & 1 \\
    c             & -1 & 1 & -1 & 1 & -1 & 1 & -1 & 1 \\ \hline
    \NTW(a, b, c) & -1 & 1 & 1 & -1 & 1 & -1 & -1 & -1
  \end{array}
  \]
  Let $\delta \in [0, 1]$ be a parameter to be fixed later. For $i = 1, \dotsc, n$, we pick bits $x_i, y_i, z_i \in
  \{-1,1\}$ as follows:
  \begin{compactitem}
  \item with probability $1-\delta$: we choose $x_i, y_i$ uniformly and independently, then set $z_i = -x_i y_i$;
  \item with probability $\delta$: we choose $x_i$ uniformly, then set $y_i = z_i =
    x_i$.
  \end{compactitem}
  Note that for $i \neq j$, $(x_i, y_i, z_i)$ is independent of $(x_j, y_j, z_j)$.  We accept if $\NTW(f(x), f(y), f(z))
  = -1$. It is immediate from the construction of $x_i, y_i, z_i$ that $\NTW(x_i, y_i, z_i) = -1$ for $i = 1, \dotsc,
  n$. Therefore, if $f$ is a dictator function, it follows that $\NTW(f(x), f(y), f(z))$ must also equal $-1$.

  \paragraph*{Soundness.} It remains to analyze the test when $f$ is pseudorandom. We begin by writing $\NTW$ in the
  Fourier basis: $\NTW = -\frac14 \chi_\emptyset - \frac14(\chi_{\{1\}} + \chi_{\{2\}} + \chi_{\{3\}}) -
  \frac14(\chi_{\{1,2\}} + \chi_{\{2,3\}} + \chi_{\{1,3\}}) + \frac34\chi_{\{1,2,3\}}$. Therefore, by symmetry,
  \[
  \E_{x,y,z} \NTW(f(x), f(y), f(z))
  = -\tfrac14 -\tfrac34 \E_x f(x) -\tfrac34 \E_{x,y} f(x)f(y) + \tfrac34
  \E_{x,y,z} f(x) f(y) f(z).
  \]
  We shall systematically rewrite the right-hand side in terms of the Fourier coefficients of $f$. By our assumption
  that $f$ is odd, we have $\hat f(S) = 0$ whenever $S$ has even cardinality. Therefore $\E f(x) = \hat f(\emptyset) =
  0$.  Also,
  \[
  \E_{x,y} f(x) f(y)
  = \sum_{S,T} \hat f(S) \hat f(T) \E_{x,y} \chi_S(x) \chi_T(y).
  \]
  Consider a summand where $S \neq T$, and without loss of generality fix $i \in S \setminus T$. It is easy to see that
  the contributions due to $x_i = \pm1$ cancel each other. Thus, the only terms that remain are of the form $S = T$,
  i.e.,
  \[
  \E_{x,y} f(x) f(y)
  = \sum_S \hat f(S)^2 \E_{x,y} \chi_S(x) \chi_S(y)
  = \sum_S \hat f(S)^2 \left(\E_{x_i,y_i} x_i y_i\right)^{|S|}
  = \sum_S \hat f(S)^2 \delta^{|S|},
  \]
  where we have used the fact that $\E (x_i y_i) = (1 - \delta) \cdot 0 + \delta \cdot 1 = \delta$. But $\hat f(S)$ is
  nonzero only for $|S|$ odd, and $\sum_S \hat f(S)^2 = 1$, so we can upper-bound the above sum by $\delta$.

  \paragraph*{Bounding the cubic term.} We proceed similarly:
  \begin{equation} \label{eq:sum3}
    \E_{x,y,z} f(x) f(y) f(z) = \sum_{S,T,U} \hat f(S) \hat f(T) \hat f(U)
    \E_{x,y,z} \chi_S(x) \chi_T(y) \chi_U(z).
  \end{equation}
  Each of the expectations can be written as a product over coordinates $i \in [n]$ using the fact that individual
  coordinates of $x,y,z$ are chosen independently.  When $i$ belongs to exactly one of $S, T, U$ (say $S$), then it
  contributes a factor $\E x_i = 0$, making the product zero. Similarly, when $i$ belongs to two of the sets (say $S,
  T$), then the contribution is $\E x_i y_i = \delta$ by our earlier calculation. Finally, when $i$ belongs to all three
  of the sets, we have $\E x_i y_i z_i = (1-\delta)\cdot(-1) + \delta\cdot(0) = -(1-\delta)$. In light of this
  calculation, any triple $S, T, U$ that makes a nonzero contribution to the sum~\eqref{eq:sum3} must be of the form
  \begin{align*}
    S &= A \cup B \cup C &
    T &= A \cup C \cup D &
    U &= A \cup D \cup B
  \end{align*}
  for suitable sets $A, B, C, D \subseteq [n]$ where $A$ is disjoint from $B, C, D$. Also $|S|, |T|, |U|$ must be odd,
  from which we can show that $|A|$ must be odd. In terms of these new sets we can rewrite
  \[
  \E_{x,y,z} f(x) f(y) f(z) = -\hspace{-1em}\sum_{\substack{B, C, D \text{ disj.
        from } A\\|A| \text{ odd}}} \hspace{-1em}
  \hat f(A \cup B \cup C) \hat f(A \cup C \cup D) \hat f(A \cup D \cup B)
  (1-\delta)^{|A|} \delta^{|B| + |C| + |D|}.
  \]
  For a fixed $A$, define the function $g_A \colon \{-1,1\}^{[n] \setminus A} \to \R$ by $\hat g_A(X) = \hat f(A \cap
  X)$. Then we have
  \begin{align*}
    &\E_{x,y,z} f(x) f(y) f(z) \\
    &= -\sum_{|A|\text{ odd}} (1-\delta)^{|A|} \sum_{\substack{B,C,D\\\text{disj.
          from }A}} \hat g_A(B \cup C) \sqrt\delta^{|B\cup C|} \cdot \hat g_A(C \cup D)
    \sqrt\delta^{|C\cup D|} \cdot \hat g_A(D \cup B) \sqrt\delta^{|D\cup B|}\\
    &= -\sum_{|A|\text{ odd}} (1-\delta)^{|A|} \sum_{\substack{B,C,D\\\text{disj.
          from }A}} \widehat{T_{\sqrt \delta} g_A}(B \cup C) \cdot \widehat{T_{\sqrt
        \delta} g_A}(C \cup D) \cdot \widehat{T_{\sqrt \delta} g_A}(D \cup B) \\ &=
    -\sum_{|A|\text{ odd}} (1-\delta)^{|A|} \|T_{\sqrt \delta} g_A\|_3^3.
  \end{align*}
  Write $g_A(u) = \E_x g_A(u) + \tilde g_A(u) = \hat f(A) + \tilde g_A(u)$. Then, using the inequality $|a+b|^3 \leq
  4(|a|^3 + |b|^3)$, we have
  \[
  \| T_{\sqrt \delta} g_A \|_3^3 = \| \hat f(A) + T_{\sqrt \delta} \tilde g_A \|_3^3 \leq
  4 | \hat f(A) |^3 + 4 \| T_{\sqrt \delta} \tilde g_A \|_3^3
  \]
  and therefore,
  \begin{align*}
    \sum (1-\delta)^{|A|} \| T_{\sqrt \delta} g_A \|^3 \leq 4 \sum
    (1-\delta)^{|A|} | \hat f(A) |^3 + 4 \sum (1-\delta)^{|A|} \| T_{\sqrt \delta} \tilde
    g_A \|_3^3.
  \end{align*}
  To bound the first term, note that $\sum (1-\delta)^{|A|} |\hat f(A)|^3 \leq \sum \hat f(A)^2 \cdot
  \max\{(1-\delta)^{|A|} |\hat f(A)|)\}$. The sum of the squared Fourier coefficients is just $1$ (by Parseval's
  identity) and we can use the $(\delta, \frac{\delta}{\log(1/\delta)})$-pseudorandomness property to bound the quantity
  in the maximum: when $|A| < \frac1\delta \log \frac1\delta$, then $|\hat f(A)| \leq \sqrt\delta$ and when $|A| \geq
  \frac1\delta \log \frac1\delta$ then $(1-\delta)^{|A|} \leq \delta$. Thus the entire first summand is
  $O(\sqrt\delta)$.

  \paragraph*{Hypercontractivity.} It remains to bound $\sum (1-\delta)^{|A|} \| T_{\sqrt\delta}\tilde g_A \|_3^3$. Fix $\lambda =
  \frac{\log 2}{\log(1/\delta)}$ and apply the modified hypercontractive inequality:
  \begin{align*}
    \sum (1-\delta)^{|A|} \| T_{\sqrt\delta} \tilde g_A \|_3^3
    &\leq \sum (1-\delta)^{|A|} \| T_{\sqrt\delta} \tilde g_A
    \|_2^{3-3\lambda} \| \tilde g_A \|_2^{3\lambda}
  \end{align*}
  Now, $\| \tilde g_A \|_2^{3\lambda} \leq 1$ and $\| T_{\sqrt\delta} \tilde g_A \|_2^{3-3\lambda} = O(\sqrt \delta)
  \sum_{\emptyset \neq B \subseteq \overline A} \delta^{|B|} \hat f(A \cup B)^2$. The contribution of the corresponding
  term to the sum we were trying to bound is $O(\sqrt \delta) \cdot \hat f(A \cup B)^2 \cdot (1-\delta)^{|A|}
  \delta^{|B|}$. For each choice of $A \cup B$, the $(1-\delta)^{|A|} \delta^{|B|}$ terms sum to at most one, and all
  the $\hat f(A \cup B)^2$ terms themselves sum to at most one. Therefore, we have bounded the entire sum by
  $O(\sqrt\delta)$ as desired.
\end{proof}

\subsection{Integrality gap for Unique Label Cover SDP}

\paragraph*{Problem and SDP relaxation.} In the Unique Label Cover problem, we are given a label set $L$ and a weighted
multigraph $G = (V, E)$ whose edges are labeled by permutations $\{ \pi_e \colon L \to L \}_{e \in E}$, and are asked to
find an assignment $f \colon V \to L$ of labels to edges that maximizes the fraction of edges $e\{u,v\}$ that are
``consistent'' with our labeling, i.e., $\pi_e(f(u)) = f(v)$. If there exists a labeling that satisfies all the edges,
then it is easy to find such a labeling. However, when all we can guarantee is that $99\%$ fraction of the edges can be
satisfied, it is not known how to find a labeling satisfying even $1\%$ of them. At the same time, present techniques
cannot show that finding a $1\%$-consistent labeling is NP-hard.

One approach to solving this problem is to use an extension of the Goemans-Williamson SDP for Max-Cut, where we set up a
vector $v_i$ for every vertex $v$ and label $i$:
\begin{align*}
  \text{maximize } & \textstyle\E_{e\{u,v\}} \sum_{i \in L} \langle u_i, v_{\pi_e(i)} \rangle \\
  \text{subject to } & \langle u_i, v_j \rangle \geq 0 & \forall u, v \in V, \forall i, j \in L \\
  & \textstyle\sum_{i \in L} \langle v_i, v_i \rangle = 1 & \forall v \in V \\
  & \langle \textstyle\sum_{i \in L} u_i, \textstyle\sum_{j \in L} v_j \rangle = 1 & \forall u, v \in L \\
  & \langle v_i, v_j \rangle = 0 & \forall v \in V, \forall i \neq j \in L
\end{align*}
(The expectation in the objective is over a distribution where $e\{u,v\}$ is picked with probability proportional to its
weight.) The intent is that $\| v_i \|^2$ should be the probability that $v$ receives label $i$, and $\langle u_i, v_j
\rangle$ should be the corresponding joint probability. It is easy to see that this SDP is a relaxation of the original
problem.

\paragraph*{Gap instance.} In an influential paper, Khot and Vishnoi \cite{KV} constructed an integrality gap for this
SDP: for a label set of size $2^k$ and an arbitrary parameter $\eta \in [0,\frac12]$, a graph whose optimal labeling
satisfies $\leq 1/2^{\eta k}$ fraction of the edges, but for which the SDP optimum is at least $1-\eta$. The
hypercontractive inequality plays a central role in the soundness analysis, which we present below.

Let $\tilde V$ be the set of all functions $f \colon \{-1,1\}^k \to \{-1,1\}$ and $L$ be the Fourier basis $\{ \chi_S
\mid S \subseteq [k] \}$; clearly, $|L| = 2^k$. Observe that $\tilde V$ is an Abelian group under pointwise
multiplication, and $L$ is a subgroup. We take the quotient $V = \tilde V/L$ to be the vertex set. Fix an arbitrary
representative for each coset and write $V = \{f_1 L, f_2 L, \dotsc, f_{|V|} L\}$. We shall define a weighted edge
between every pair of these representative functions, then show how to extend this definition to all pairs of functions,
and finally map these edges to edges between cosets.
\begin{itemize}
\item The edge $\tilde e\{f, g\}$ has weight equal to $\Pr_{h,h'}[(f, g) = (h, h')]$, where $h, h' \in V$ are drawn to
  be $\rho$-correlated on every bit with uniform marginals, where $\rho = 1-2\eta$.
\item With every edge $\tilde e\{f_i, f_j\}$ between representative functions, we associate the identity permutation.
\item A non-representative function acts as if its label is assigned according to its coset's representative. Thus, the
  permutation associated with $\tilde e\{f_i \chi_S, f_j \chi_T\}$ is $\chi_U \chi_S \mapsto \chi_U \chi_T$.
\item In the actual graph under consideration, every edge $\tilde e\{f_i \chi_S, f_j \chi_T\}$ appears as an edge
  $e\{f_i L, f_j L\}$ (with the same permutation and weight).
\end{itemize}

\paragraph*{Soundness analysis.}

Given a labeling $R \colon V \to L$ on the cosets, we consider the induced labeling $\tilde R \colon \tilde V \to L$
given by $\tilde R(f_i \chi_S) = R(f_i L) \chi_S$. From our definitions, it is clear that the objective value attained
by $\tilde R$ is precisely $\Pr_{h,h'}[\tilde R(h) = \tilde R(h')]$, where $h, h'$ are chosen as before. Fix any label
$\chi_S$ and consider the indicator function $\phi \colon \tilde V \to \{0, 1\}$ of functions that $\tilde R$ labels
with $\chi_S$. Since exactly one function in each coset gets labeled $\chi_S$, we know that $\E \phi =
1/2^k$. Therefore,
\[
\Pr_{h,h'}[\tilde R(h) = \tilde R(h') = \chi_S] = \E_{h,h'}[\phi(h) \phi(h')]
= \langle h, T_\rho h \rangle = \| T_{\sqrt \rho} h \|^2_2,
\]
which we can upper-bound (using hypercontractivity) by $\| h \|^2_{1+\rho} =1/2^{\frac{2k}{1+\rho}} \leq 1/2^{\eta k}$.

\bibliography{Quals} \bibliographystyle{alpha}

\end{document}